\documentclass[aps,prx,longbibliography,a4paper,reprint,twocolumn,superscriptaddress,nofootinbib,floatfix]{revtex4-2}
\pdfoutput=1
\usepackage{amsmath}
\usepackage{amssymb}
\usepackage{dsfont}
\usepackage{verbatim}
\usepackage{graphicx}
\usepackage{tabularx}
\usepackage{xcolor}
\usepackage{mathtools}
\usepackage{bbold}
\usepackage{blkarray}
\usepackage{appendix}
\usepackage[shortlabels]{enumitem}
\usepackage{latexsym}
\usepackage{array}
\usepackage{colortbl}
\usepackage{bm}
\usepackage{amsthm}

\usepackage{tikz}
\usetikzlibrary{calc}
\usetikzlibrary{math}

\usepackage{thmtools}
\newtheorem{theorem}{Theorem}
\newtheorem{proposition}[theorem]{Proposition}
\usepackage{accents}

\newcommand{\ket}[1]{\mathinner{|#1\rangle}}

\newcommand{\ketbra}[2]{\mathinner{|#1\rangle\langle#2|}}

\DeclareMathOperator{\tr}{tr}

\newcommand{\CC}{\mathcal{C}}
\newcommand{\C}{\mathds{C}}
\newcommand{\ot}[0]{\otimes}
\renewcommand{\a}{\alpha}
\renewcommand{\b}{\beta}
\newcommand{\one}[0]{\mathds{1}}

\usepackage{dsfont} 

\usepackage{nicefrac}
\usepackage{hyperref}
    \hypersetup{
    colorlinks,
    linkcolor={red!40!black},
    citecolor={blue!50!black},
    urlcolor={blue!20!black}
}

\newcommand{\uibkexp}{\affiliation{Universit\"at Innsbruck, Institut f\"ur Experimentalphysik, Technikerstraße 25, Innsbruck, Austria}}
\newcommand{\iqoqi}{\affiliation{Institute for Quantum Optics and Quantum Information of the Austrian Academy of Sciences,  Technikerstraße 21a, Innsbruck, Austria}}
\newcommand{\aqt}{\affiliation{Alpine Quantum Technologies GmbH, Innsbruck, Austria}}
\newcommand{\sydney}{\affiliation{School of Mathematical and Physical Sciences, Macquarie University, NSW, Australia}}
\newcommand{\uq}{\affiliation{School of Mathematics and Physics, The University of Queensland, St Lucia, QLD, 4072, Australia}}

\usepackage{caption}
\usepackage{subcaption}
\usepackage{ragged2e}

\DeclareCaptionJustification{myjustified}{\justifying}
\begin{document}

\title{Holographic quantum codes with trapped ions}

\author{Alex Steiner}
\thanks{Contact: \href{mailto:alex.steiner@uibk.ac.at}{alex.steiner@uibk.ac.at}, \href{mailto:martin.ringbauer@uibk.ac.at}
{martin.ringbauer@uibk.ac.at}
}
\uibkexp
\author{Gerard Anglès Munné}
\affiliation{Institute of Theoretical Physics and Astrophysics, University of Gdańsk, 80-308 Gda\'nsk, Poland}
\author{Robert Freund}
\uibkexp
\author{Ivan Pogorelov}
\uibkexp
\author{Michael Meth}
\uibkexp

\author{Robert J. Harris}
\uq
\author{Gavin Brennen}
\sydney
\author{Thomas M. Stace}
\uq
\author{Thomas Monz}
\uibkexp
\aqt
\author{Rainer Blatt}
\uibkexp
\aqt
\iqoqi
\author{Felix Huber}
\affiliation{
Division of Quantum Information, 
Faculty of Mathematics, Physics and Informatics,
University of Gdańsk, Wita Stwosza 57, 80-308 Gdańsk, Poland}

\author{Martin Ringbauer}
\thanks{Contact: \href{mailto:alex.steiner@uibk.ac.at}{alex.steiner@uibk.ac.at}, \href{mailto:martin.ringbauer@uibk.ac.at}
{martin.ringbauer@uibk.ac.at}
}
\uibkexp

\begin{abstract}
Holography is a central concept at the intersection of gravity, condensed matter theory, and quantum information, linking the interior bulk of a system to its boundary. 
A model realizing key features of holographic systems is the 
holographic pentagon code by Pastawski et al.
Here we experimentally implement instances of the holographic pentagon and heptagon codes with trapped ions and test their properties: 
For the pentagon code, 
we recover logical bulk qubits from their nearby boundary and test the Ryu-Takayanagi entanglement area law.
For the heptagon code, we show that the transversal Hadamard gate native to the constituent Steane codes induces a single-qubit, correctable error in the holographic code.
Our implementation paves the way towards the use of holographic quantum codes for quantum information processing. 
\end{abstract}

\maketitle

Holography has emerged as a key concept in quantum gravity, condensed matter theory, and quantum information~\cite{Bousso_2002, Headrick_2007,RevModPhys.83.1057}. 
At its core, holography relates information to geometry. The key idea is that certain properties in the interior (bulk) of a volume are in one-to-one correspondence to properties on its lower-dimensional boundary. 
In other words, information read at the boundary allows one to infer information about the bulk.
This concept appears in quantum gravity, where the Anti-de Sitter/conformal field theory (AdS/CFT) correspondence links an Anti-de Sitter string theory in the bulk to a conformal field theory on the boundary~\cite{Maldacena1999}. A similar bulk-boundary correspondence exists 
in topological quantum matter, where topological invariants in the bulk are related to boundary (surface) states~\cite{Prodan_2016}.
As holography is concerned with recovering information of 
a quantum system from some of its subsystems, 
it was soon linked to quantum error correction (QEC)~\cite{
Verlinde_2013, Almheiri_2015, latorre2015holographiccodes}. 
Indeed, the task in QEC is similar: There, one encodes a state on $k$ logical qubits into the correlations of a larger number $n$ of physical qubits, so that errors affecting only a few physical qubits can be identified and corrected.

Recent work by Pastawski et al.~\cite{Pastawski2015} proposed a concrete quantum code realizing key features of holography. 
Based on the idea that tensor networks can capture key features of holography~\cite{Evenbly_2011, Almheiri_2015, latorre2015holographiccodes}, their model formulates a tensor network composed of quantum stabilizer codes. 
This so-called pentagon or Pastawski-Yoshida-Harlow-Preskill (HaPPY) code has its physical qubits located on the boundary of a hyperbolic tiling, 
whose interior (bulk) corresponds to encoded logical qubits.
Interestingly, this code shows certain holographic features: it allows one to recover a subset of logical (bulk) qubits from physical qubits on their nearby boundary, also known as partial recovery or bulk reconstruction, see Fig.~\ref{fig:partialrec}. It also satisfies the Ryu-Takayanagi formula~\cite{Ryu_2006}, which relates the amount of entanglement between boundary regions to the length of the partition of the bulk ($\gamma$ in Fig.~\ref{fig:partialrec}). 
This construction has been extended, including the heptagon code which has the Steane code as building blocks~\cite{Heptagon_code} (investigated in this work) and random tensor networks~\cite{Hayden_2016}.

Generally, when considering quantum codes, one is concerned with their distance $d$: this is the highest number, such that all errors affecting at most $d-1$ subsystems can be detected; this implies that all errors affecting at most $\lfloor \tfrac{d-1}{2} \rfloor$ subsystems can be corrected. 
Topological quantum codes such as the surface, toric, and color codes are additionally endowed with a topology (e.g.\ a torus). These codes are then also concerned with the requirement that all topologically trivial errors are correctable (that is, errors that are sufficiently local with respect to the topology). In particular, surface and color codes have been implemented in recent experiments \cite{Postler2022,Google2023,acharya2025,bluvstein2024,lacroix2024,zhaoRealizationErrorCorrectingSurface2021,postler2024,ryan-anderson2021}.

In contrast, holographic codes are not concerned with full recovery up to a given distance, but with partial recovery of quantum information from a subregion: 
That is, one does not require to correct all errors and reconstruct all encoded information fully, but only to reconstruct information residing in a bulk region from its nearby boundary. 
Surprisingly, this partial recovery is possible from different boundary regions. This corresponds to the fact that logical operators acting on the bulk can also be represented on multiple boundary regions, a feature also seen in quantum gravity~\cite{harlow2018tasilecturesemergencebulk}. Overall, by considering which boundary regions are able to reconstruct a given bulk region, one can attach a notion of a physical distance and locality to the code. This attaches a geometry to the code, which provides some intuition for how bulk information can be reconstructed from the boundary. Despite their theoretical interest, holographic quantum codes have not yet been implemented in the laboratory, a state of affairs this paper addresses.

We experimentally realize holographic QEC codes on a trapped-ion quantum processor to investigate their error correction capabilities and connection to the physics of holography through features such as partial recovery. 
Following the framework of Ref.~\cite{Angl_s_Munn__2024, PhysRevLett.127.040507, PhysRevA.105.052446}, the original tensor network code is understood as a stabilizer {\em graph code}, where logical states are described in terms of graph states. 
This allows us to prepare code states, perform logical operations, and implement decoding operations in an efficient manner. 
Experimentally, we demonstrate the key QEC properties of the code, including encoding, error detection, (transversal) logical operations, and (partial) decoding. We further study the physical properties, such as the bulk-boundary correspondence in partial decoding, and the entanglement structure. We conclude by discussing the role of holographic codes, not just for QEC, but also as an intriguing connection between quantum information and the physics of holography.

\section{Holographic quantum codes}\label{sec:holography}
 {\bf Theory. ---} A quantum error-correcting code aims to protect quantum information and computation against noise. 
This is achieved by encoding $k$ \emph{logical} qubits into a subspace of $n>k$ \emph{physical} qubits, in such a way that certain types of noise acting on the encoded information can be corrected by a recovery operation. 
A classic example is the repetition code, for which the states $\ket{0}$ and $\ket{1}$ 
are mapped to the two logical states
$(\ket{000} + \ket{111})\sqrt{2}$ 
and $(\ket{000} - \ket{111})\sqrt{2}$.
The parity measurements $ZZI$ and $IZZ$ (short for tensor product of Pauli operators i.e. $\sigma_z\otimes\sigma_z\otimes1$) then allow us to detect one single bit-flip error which maps $\ket{0} \longleftrightarrow \ket{1}$.

A holographic quantum code is a class of quantum codes that 
has an additional geometry attached to it~\footnote{While many codes have a notion of geometry attached to them, what most often matters is not their geometry but topology. For instance, in a surface or toric code all topologically trivial errors are correctable.}. 
Intuitively, the quantum information of the interior is mapped to the exterior, so that one is able to manipulate bulk information by acting on its boundary. 
Then, the logical qubits correspond to a bulk region $\mathcal{B} = (\mathds{C}^2)^{\otimes k}$, 
with the physical qubits corresponding
to its boundary $\partial \mathcal{B} = (\mathds{C}^2)^{\otimes n}$ (see Fig.~\ref{fig:MinimalGraph}(a)).
In this context, we denote the code subspace by $\CC \subset \partial \mathcal{B}$. 
What distinguishes a holographic code from other (e.g. topological or stabilizer) codes is the ability to also perform {\em partial} recovery (decoding) operations: any logical information corresponding to a subregion of the bulk can be recovered from its nearby boundary. 
This is in contrast with traditional quantum codes that are only concerned with the notion of distance and for which only a full recovery of the encoded information is possible.

\begin{figure}[!t]
    \centering
    \includegraphics[width=\linewidth]{Figures/HypPentCode_no_shade.pdf} 
    \caption{\textbf{Holographic pentagon code.} The tensor network encodes $k$ bulk qubits (inner dots) into $n$ boundary qubits (outer dots). Each pentagon represents an absolutely maximally entangled state of 6 qubits (center plus five legs). 
    For particular cuts $\gamma$, the geometry of the tensor network allows for partial decoding, i.e. one recovers the qubits from the bulk region $E$ by reading only the qubits from its nearby boundary $\partial E$.
    \label{fig:partialrec}
    }
\end{figure}

The holographic code we consider here 
is given as a stabilizer code.
The latter is described by a commuting set of independent elements $g_1, \dots, g_{n-k}$ of the $n$-qubit Pauli group that generates the set of stabilizers $S = \langle g_1, \dots, g_{n-k}\rangle$, where $-\one \not \in S$~\cite{gottesman1997stabilizercodesquantumerror}. 
The code space is characterized as the joint $+1$ eigenspace of all stabilizer elements $s \in S$.
Therefore, the projector onto the code space $\Pi$ satisfies $s \Pi = \Pi$ and it is given by $\Pi = \prod_{i=1}^{n-k} (\one + g_i)/2 = \sum_{s\in S} s / 2^{n-k}$. 
Errors for stabilizer codes can be detected in the following way:
a stabilizer element $s$ is able to detect all errors that anti-commute with $s$. 
Measuring $s$ then acts as a syndrome measurement, indicating which errors have occurred: If a syndrome measurement returns $\langle s_i \rangle = +1$ for all $i=1, \dots, n-k$, then no (detectable) error has occurred, while $\langle s_i \rangle = -1$ for some $s_i$ indicates an error. 

Any stabilizer code encoding $k$ to $n$ qubits can be represented as a graph $G$ on the $(n+k)$ vertices, which in turn corresponds to a graph state $\ket{G}$. 
This is known as a graph code and $\ket{G}$ is defined as
\begin{equation}\label{eq:graphcz}
    \ket{G}=\prod_{(u,v)\in E} \text{CZ}_{uv} \ket{+}^{\otimes {n+k}}\,,
\end{equation}
with $E$ being the edges of $G$ and $\text{CZ}_{uv}$ being a controlled-$Z$ gate acting on qubits $u$ and $v$. 
The graph form allows us to extract the \emph{logical states} $\ket{G_{i_1\cdots i_k}}$, which form a basis in the code subspace $\CC$ and are, up to phases, $n$-qubit graph states.
In addition, the logical operators $\bar{X}_j$ and $\bar{Z}_j$ can also be derived from $G$.
These operators are defined to act on $\ket{G_{i_1\cdots i_k}}\in \CC$ in the same way as Pauli operators $X_j$ and $Z_j$ act on a computational basis $\ket{i_1 \cdots i_k}\in \mathcal{B}$.
The encoding and decoding operations are thus contained in the graph structure.

The task of encoding information in a quantum code can be represented by an isometry $\mathcal{E}: (\C^2)^{\ot k} \to (\C^2)^{\ot n}$, such that for a given noisy channel $\mathcal{N}$ acting on the physical qubits, there exists a recovery operation $\mathcal{R}$ satisfying the composition of channels
\begin{equation}
    \mathcal{R} \circ \mathcal{N} \circ \mathcal{E}(\varrho) = \varrho 
\end{equation}
for every $\varrho$ on $(\C^2)^{\ot k}$.
In the case of a holographic code, we consider certain cuts $\gamma$ that split the bulk region into two subregions $E$ and $F$, and the boundary into $\partial E$ and $\partial F$ (see Fig.~\ref{fig:partialrec}). Then, one additionally has the property
\begin{equation}\label{eq:holo_rec}
    \big(
    (\mathcal{\widetilde R} \ot \tr_{\partial F})
    \circ
    \mathcal{\widetilde{N}}
   \circ 
    \mathcal{E}(\varrho)\big) 
    = \varrho_E \,,
\end{equation}
where $\varrho_E = \tr_F(\varrho)$. 
Here $\mathcal{E}: {\mathcal{B} \to \partial \mathcal{B}}$ encodes bulk to boundary, $\mathcal{\widetilde{N}}: \partial\mathcal{B}  \to \partial\mathcal{B}$ is noise acting non-trivially on a part $\partial F$ of the boundary only, and $\mathcal{\widetilde R}: \partial E \to E$ is a {\em partial} recovery operation. That is, the quantum information which originated in a bulk region $E$ can be recovered from its nearby boundary region $\partial E$.
This allows for holographic codes to recover only a part of the encoded information (here: the part which was encoded from $E$ into $\partial E$),
without the need to decode the full code.
In contrast, a traditional code would require access to the full system $\partial E \cup \partial F$ even if only some logical qubits need to be recovered.

The regions for which Eq.~\eqref{eq:holo_rec} holds are determined by the specific holographic code. In the following, we give examples of this property by showing selected regions for which recovery is possible (see Fig.~\ref{fig:partialrec}). Details on the unitary gates and circuits corresponding to these maps can be found in the supplemental material~\ref{ss:Additional_logical_operators_HaPPY}.

Holographic systems also satisfy the Ryu-Takayanagi (RT) formula~\cite{Ryu_2006, Harlow_2017,harlow2018tasilecturesemergencebulk}.  
For the holographic code considered here, the RT formula states that the von Neumann entropy of the reduced state $\sigma_{\partial E}$ on $\partial E$ of any $\ket{G_{i_1\cdots i_k}}$ satisfies
\begin{align}\label{eq:RTformula}
S(\partial E)=- \tr\left(\sigma_{\partial E} \log_2 \sigma_{\partial E}\right) =|\gamma| \,,
\end{align}
where $|\gamma|$ is the number of contracted indices in the tensor network between $E$ and $F$.
That is, the entropy of any state defined on the boundary $\partial E$ is proportional to the length of the cut $\gamma$ which separates $E$ from $F$.
Importantly, the RT formula only holds for cuts $\gamma$ for which complementary recovery is possible, i.e., for which there exists an isometry mapping $E$ to $\partial E$ and another $F$ to $\partial F$~\cite[Section 4.2]{Harlow_2017}.

 {\bf Pentagon code. ---}
We implement the minimal instance of the hyperbolic pentagon code~\cite{Pastawski2015}, consisting of $12$ physical boundary qubits and $4$ logical bulk degrees of freedom (labeled A-D), as shown in Fig.~\ref{fig:MinimalGraph}(a) and (b). 
We use the formulation in terms of its corresponding stabilizer graph code, derived by the authors in Ref.~\cite{Angl_s_Munn__2024}. The pentagon code can be represented by a stabilizer state generated by contracting four six-qubit absolutely maximally entangled (AME) states~\cite{Pastawski2015}. Here, contraction means projecting corresponding qubits of different AME states onto a maximally entangled state, thereby linking the independent AME states into a single stabilizer state. Formally, this corresponds to a tensor-index contraction~\cite{Hayden_2016}, which is equivalent to projecting onto a Bell state and performing a partial trace over the contracted qubits. Following Ref.~\cite{Angl_s_Munn__2024}, the resulting stabilizer state can be mapped to a graph code.
The logical zero state of the resultant graph code is the graph state $\ket{G_{0000}}$ illustrated in Fig.~\ref{fig:MinimalGraph}(b).
Preparing $\ket{G_{0000}}$ then requires 12 qubits initialized in the $\ket{+}$ state, followed by 28 $\text{CZ}$-gates. Details on the state preparation, as well as the stabilizer generators and the logical operators, are given in Supp. Mat.~\ref{ss:Stabilizer_generators_logical_operators_HAPPY}.

The main focus of the pentagon code is its implementation on a trapped-ion quantum computer and the examination of its holographic features. The success of the implementation is evaluated using two complementary approaches. We assess state preparation by measuring all stabilizer generators and logical operators of various logical states, as well as performing direct fidelity estimation to find a lower limit to the state preparation fidelity. We place particular emphasis on the partial decoding of logical information. Specifically, we study the recovery of the encoded information under the application of errors on different qubits on the boundary region and measure the corresponding von Neumann entropy. In addition, we also demonstrate the ability to detect single-qubit errors on the boundary. Specifically, we show that each single-qubit error produces a unique syndrome.

{\bf Heptagon code. ---}
An alternative approach to QEC using holographic codes is provided by the holographic heptagon code. This code belongs to the Calderbank-Shor-Steane (CSS) class of stabilizer codes~\cite{Heptagon_code}, allowing all stabilizers and logical operators consisting only of $X$ or $Z$ Pauli operators to be read out simultaneously. As a result, it makes more efficient use, compared to the pentagon code, of the mid-circuit resources available on current quantum hardware platforms. 
We demonstrate the implementation of the smallest instance of the holographic heptagon code~\cite{Heptagon_code}, in which two logical qubits are encoded into 12 physical qubits. The holographic heptagon code can be created by contracting two $[\![7, 1, 3]\!]$ Steane codes~\cite{Heptagon_code}. Here, contraction denotes the projection of corresponding qubits onto a Bell state, followed by a partial trace over those qubits.
Details on the stabilizer generators and logical operators as well as on the sate preparation are given in Supp. Mat.~\ref{SS:Init_heptagon}.

Our main focus in the following is the implementation of transversal single-qubit and two-qubit gates. A transversal gate does not connect any physical qubits inside the same logical qubit with multi-qubit gates.  While the Steane code admits a transversal Hadamard gate, this does not hold for the heptagon holographic code, since it is no longer a block code. However, we show that applying a physical Hadamard to all physical qubits corresponding to a logical qubit realizes a logical Hadamard gate on the respective logical qubit up to a single-qubit error on the neighboring logical qubits in the code, which we show how to correct in post-processing.

\section{Trapped-ion implementation}
All experiments are carried out on a trapped-ion quantum processor, as described in Ref.~\cite{pogorelov_compact_ion_2021}. Sixteen $^{40}\text{Ca}^+$ ions are trapped in a macroscopic linear Paul trap, where the electronic states are controlled via laser pulses. Each ion encodes one qubit into the electronic Zeeman levels $\ket{0} = \ket{4^2\text{S}_{1/2}, m_J=-1/2}$ and $\ket{1} = \ket{3^2\text{D}_{5/2}, m_J=-1/2}$, which are coupled via an optical quadrupole transition at a wavelength of 729\,nm. Coulomb interaction between the ions gives rise to collective motional modes, which are used to mediate entangling operations between any desired pair of qubits. The capability to selectively address individual as well as pairs of ions enables arbitrary single-qudit and two-qudit interactions. The native gate set consists of arbitrary rotations around the  X and Y axes of the Bloch sphere, virtual ${Z}$-rotations implemented in software~\cite{mckay_efficient_z_2017}, and the maximally entangling two-qubit M\o{}lmer-S\o{}rensen (MS, or $R_{XX}$) gate $\exp (\texttt{i}\frac{\pi}{4}X\otimes X)$~\cite{sorensen_quantum_computation_1999}, which is equivalent to a $\text{CNOT}$ gate up to local rotations~\cite{maslov_basic_circuit_2017}.

The string of 16 ions is divided into two groups, and used for different purposes: four ions in the center of the trap are idling and do not take part in any computation, while the logical qubits are encoded into the six ions at the ends of the ion chain. This type of implementation reduces cross-talk between neighboring ions and improves two-qubit gate fidelity since the ion distance in the string decreases towards the middle of the string. Moreover, cooling the highest-order modes, affecting predominantly the central ions, to the motional ground state is challenging. Hot modes together with imperfect calibration reduce the fidelity of the entangling gates.

{\subsection{Pentagon Code}}\label{subsection_happy_encoding}
The experimental preparation of the graph state $\ket{G^{}_{0000}}$ is evaluated in two ways: Estimating the fidelity by measuring the stabilizer generators of one logical state, and performing Direct Fidelity Estimation~(DFE)~\cite{DirectFidelityEstimation}.

\begin{figure}
	\centering
	
	\begin{subfigure}[t]{0.475\linewidth}
		\captionsetup{justification=raggedright,
			singlelinecheck=false,
			position=top}
		\caption{{\small a}}
		\includegraphics[width=\linewidth]{Figures/MinimalGraph_1-4.pdf}
	\end{subfigure}
	\quad
    \begin{subfigure}[t]{0.475\linewidth}
		\captionsetup{justification=raggedright,
			singlelinecheck=false,
			position=top}
		\caption{{\small b}}
		\includegraphics[width=\linewidth]{Figures/MinimalGraph_2-3.pdf}
	\end{subfigure}
	\quad
	\begin{subfigure}[t]{\linewidth}
		\captionsetup{justification=raggedright,
			singlelinecheck=false,
			position=top}
		\caption{{\small c}}
		\includegraphics[width=\linewidth]{Figures/figure_happy_init/Multifigure_HaPPY_init_1000.pdf}
	\end{subfigure}
	 \caption{
        \textbf{A holographic graph code.}
        \textbf{(a)} Highlights a small instance of the pentagon code, which we prepare experimentally. The four logical bulk qubits are labeled by A,B,C,D. \textbf{(b)} The logical-zero graph state $\ket{G_{0000}}$ of the code instance in (a), after being mapped to a graph code. Recall that $\ket{G_{0000}}$ is the boundary state that results from encoding the bulk state $\ket{0000}$. Hence, the bulk qubits are collectively encoded in the state and cannot be associated with specific boundary qubits.
        \textbf{(c)} The simulated (orange) and measured (blue) expectation values of the stabilizer generators and logical $Z-$operators of the initialization of the pentagon code in the $\left|1000\right>$ state. The error bars are computed from quantum projection noise and represent one standard deviation of statistical uncertainty. Note that the simulations are based on a simplified depolarizing noise model, and some deviations from the measured values are thus expected.
        }
     \label{fig:MinimalGraph}
\end{figure}

For the first method the expectation values of the stabilizer generators and the logical operators of $\ket{G^{}_{1000}}$ are measured as shown in Fig.~\ref{fig:MinimalGraph}(c). The experimental data is compared with a numerical simulation incorporating a depolarization channel to model the noise introduced by both single-qubit and two-qubit gate operations. However, modeling the system with a depolarizing channel as the sole noise source constitutes a simplification. Consequently, deviations between experimental results and numerical simulations are to be expected. A comprehensive description of the simulation method is presented in Supp.\ Mat.~\ref{SS:Simulation_methods}. The expectation values for the pentagon code are measured sequentially on a shot-by-shot basis, as the combined number of physical qubits of the code and ancilla qubits required for mapping the stabilizer generators exceeds the available quantum computational resources. Therefore, no stabilizer mapping on ancillas was performed and the stabilizer generators are measured directly on the physical qubits. Each stabilizer generator and logical operator expectation value is averaged over the outcomes of $2 \cdot 10^4$ shots. The average expectation value of all stabilizer generators and logical operators is 0.58(2) [simulation: 0.59(2)].

The second approach is performing Direct Fidelity Estimation on $\ket{G^{}_{0000}}$ to determine the fidelity from a limited number of measurements. Since $\ket{G^{}_{0000}}$ is a stabilizer state, it is sufficient to measure the expectation value of a small random subset of its stabilizer group to estimate the fidelity between the desired state $\rho$ and the target state $\sigma$~\cite{ringbauer2025}. 
We randomly select 450 stabilizers from the full stabilizer group and measure their expectation values taking 100 shots for each outcome. The measured fidelity of the logical zero state is estimated to be $F(\rho, \sigma)=0.309(13)$, which is well above the threshold of $1/12$ for the maximally mixed state. Fidelity uncertainty is estimated via sub-sampling, using $100$ subgroups of $100$ randomly selected expectation values each. It is calculated as the mean of the standard deviations across these subgroups. We emphasize that the fidelity, as a global property of the state, is expected to be more sensitive to noise than the local stabilizer expectation values and logical operators quoted above. 

{\bf Logical operations. ---}
\label{subsection_happy_gates}
We demonstrate the ability to perform global and addressed gates on the logical qubits A-D. Therefore, we prepare the state $\ket{G^{}_{0000}}$ and realize the logical single-qubit $\overline{X}$ and $\overline{H}$, as well as a logical $\overline{\text{CZ}}$ gate. Details regarding each gate are presented in Supp.\ Mat.~\ref{ss:Additional_logical_operators_HaPPY}.

The successful implementation of each gate is evaluated by measuring the expectation values of the stabilizer generators and logical operators. Each outcome is compared to numerical simulations using the simple depolarizing model discussed in the supplementary material. We find the average expectation values of the stabilizer and logical operators to be:
\vspace{0.2cm}
\begin{enumerate}[(i)]
    \item 0.58(2) [sim:\,0.59(2)] for $\ket{G^{}_{1000}}$, 
    \item 0.51(2) [sim:\,0.58(2)] for $\ket{G^{}_{+000}}$,
    \item  0.47(2) [sim:\,0.53(2)] for $\tfrac{1}{\sqrt{2}}\left(\ket{G^{}_{+0++} + G^{}_{-1++}}\right)$\,.
\end{enumerate}
The expectation values of all stabilizers and logical operators are discussed in the Supp.\ Mat. ~\ref{SS:Eigenvalues_gates_HAPPY} and illustrated in Fig.~\ref{fig:SS_eigenvalues_happy_init}(a-d).

{\bf Error correction. ---}\label{subsection_happy_syndrome_readout}
Here we investigate the ability to perform standard error-correction for the pentagon code, disregarding its holographic features.
The small instance considered in this work is formally a $[\![12,4,3]\!]$ quantum code (see Fig.~\ref{fig:MinimalGraph}(a)), 
and thus can correct all single-qubit errors.
This encodes $4$ logical qubits into $12$ physical qubits, with a code distance $3$, thus being able to correct $1$ errors.
This can be seen by noting that all bulk qubits can be recovered from up to $2$-qubit erasures, implying that the code distance is $3$.
Identification of such errors is a crucial step towards realizing quantum error correction. To this end, we initialize the pentagon code in the logical zero state $\ket{G^{}_{0000}}$. An artificial $X$-, $Y$-, or $Z$-type error is then applied to a single physical qubit, followed by a destructive measurement of one of the stabilizer generators $g_1$ through $g_8$. The resulting syndromes are compared to both numerical simulations and the expected ideal outcomes. A lookup table mapping each error type to its corresponding syndrome is provided in the Supp.\ Mat.\ ~\ref{SS:Lookup_table_HaPPY}.

We demonstrate error detection for 6 randomly chosen single-qubit errors, out of the complete set of 36. Each selected syndrome is measured over $2\cdot10^4$ shots. The averaged expectation values for all stabilizer generators are presented in Fig.~\ref{fig:Syndrome}. All measured syndromes agree with the expected outcomes after introducing the corresponding error, demonstrating the code's ability to detect and correct them.

{\bf Partial decoding. ---}\label{subsection_happy_partial_decoding}
A key feature of holographic codes is partial decoding. Logical information encoded in the bulk can be recovered via a set of qubits on the boundary, due to the bulk-boundary correspondence. Since this boundary region is not unique, several decoding strategies can be employed to recover a logical qubit. We demonstrate the decoding of two, three, and all four logical qubits using the minimal number of boundary qubits, see Supp.\ Mat.~\ref{sect:enc_extra_q} and \ref{SS:alternative_partial_decoding}.
While we present only a subset of possible decoding circuits for recovery, the decoding strategy is symmetric under reflections and rotations. The circuits shown, therefore, represent distinct decoding geometries up to symmetry and are valid for any reflected or rotated configuration. 
All decoding strategies follow a similar scheme: 
The code is partitioned (cut) into two regions, with one region containing the boundary region required for decoding the selected bulk qubits; its complement can be ignored. The bulk information is then mapped onto the boundary by an appropriate decoding circuit, shown in Fig.~\ref{fig:unitary_rec} (cut 1), ~\ref{fig:D3} (cut 2), and ~\ref{fig:D4} (cut 3) for decoding two, three, and all four bulk qubits, respectively. 

The success of partial decoding is evaluated by measuring the decoded logical qubits in the computational basis after the circuits are applied.
The first experiment aims to decode two logical qubits, namely logical qubits A and B, as illustrated in Fig.~\ref{fig:unitary_rec}(a).
Here, the operations required for partial decoding are applied solely to boundary qubits 1-5, highlighting that only a confined region is required to decode the bulk information of logical qubits A and B. Furthermore, we investigate the decoding of three logical qubits A, B, D (cut 2) according to Fig.~\ref{fig:D3} and of all four logical qubits A, B, C, D (cut 3), according to Fig.~\ref{fig:D4}. The expectation values of the decoded logical qubits for cuts 1-3 are shown in Fig.~\ref{fig:exp_partial_decoding}(a-c) and compared to simulations, with the numerical values given in Supp.\ Mat.~\ref{SS:Partial_decoding_2_3}.
The experimentally obtained expectation values are in good agreement with the numerical simulations employing a depolarizing noise model, 
confirming that the decoding protocol reliably recovers the encoded logical qubits from the respective boundary. The variation in the expectation values between the three cases agrees with the numerical simulation, and is thus likely due to the different numbers of entangling gates in the corresponding decoding circuits.

In the case of decoding 4 logical qubits, the procedure corresponds to that of standard quantum error correction. 
Interestingly, the holographic code is able to correct more errors than its constituent codes:
For example, holographic codes have the potential to correct so-called burst errors, which are simultaneous errors on neighboring qubits~\cite{fan2025characterizingbursterrorcorrection,doi:10.1143/JPSJ.69.3540}. Fig.~\ref{fig:D4} shows an example of burst error correction for qubits $q_5, q_6, q_7$, which goes beyond the capability of the constituent $[\![5,1,3]\!]$ codes.
\begin{figure}
    \centering
	\hspace{-1.5cm}
	\begin{subfigure}[t]{0.45\linewidth}\captionsetup{justification=raggedright,
			singlelinecheck=false,
			position=top}
		\caption{{\small a}}
		\vspace*{-0.3cm}\includegraphics[width=\linewidth]{Figures/GraphD2_1.pdf}
	\end{subfigure}
	\hspace{0.1cm}
	\begin{subfigure}[t]{0.3\linewidth}
	\captionsetup{justification=raggedright,
			singlelinecheck=false,
			position=top}
		\caption{{\small b}}
        \hspace*{14pt}
        \includegraphics[width=\linewidth]{Figures/GraphD2_2.pdf}
	\end{subfigure}
	
    \begin{subfigure}[t]{0.45\linewidth}
		\captionsetup{justification=raggedright,
			singlelinecheck=false,
			position=top}
		\caption{{\small c}}
    		\vspace*{-0.2cm}\includegraphics[width=\linewidth]{Figures/D2_2-1.pdf}
	\end{subfigure}
	\quad
    \begin{subfigure}[t]{0.48\linewidth}
		\captionsetup{justification=raggedright,
			singlelinecheck=false,
			position=top}
		\caption{{\small d}}
    		\vspace*{-0.2cm}
            \includegraphics[width=\linewidth]{Figures/penta_3-1-1-1.pdf}
	\end{subfigure}
    \caption{\textbf{Cut 1: Partial decoding of two bulk qubits.} 
    \textbf{(a)} A possible cut $\gamma$ allowing for partial decoding using the code instance from Fig.~\ref{fig:MinimalGraph}(a).
    \textbf{(b)} The information of two bulk qubits A and B can be recovered from the five boundary qubits $q_1$ to $q_5$.
    %
    This recovery is performed via a unitary operator constructed from the concatenation of two AME states as illustrated in (b).
    The arrows indicate the direction of the unitary, showing that qubits $q_2$ and $q_4$ will contain the information of the bulk qubits A and B after the operator is applied.
    \textbf{(c)} Recovery unitary circuit derived from the concatenation of the two AME states in (b), where each AME state corresponds to a unitary $U$.
    An additional Hadamard gate $H$ is applied to $q_3$.
    This is because we are performing partial decoding using the holographic graph code of Fig.~\ref{fig:MinimalGraph}(b), which is, up to local Clifford, equivalent to the holographic code obtained from a direct contraction of the tensor network shown in Fig.~\ref{fig:MinimalGraph}(a).
    More details of the derivation can be found in Supp.\ Mat.\ ~\ref{SS:alternative_partial_decoding}.
    \textbf{(d)} Decomposition of the unitary $U$, as detailed in the Supp. Mat. ~\ref{sect:enc_extra_q}. 
	The $H$ box represents a Hadamard gate, the dot-line-dot a CZ gate, and the dot-line-$\oplus$ is a CNOT gate.
    All gates to perform partial decoding are extracted from the graph state form.
    }
    \label{fig:unitary_rec}
\end{figure}

\begin{figure}
    \centering
    \begin{subfigure}[t]{0.45\linewidth}
		\captionsetup{justification=raggedright,
			singlelinecheck=false,
			position=top}
		\caption{{\small a}}
		\includegraphics[width=\linewidth]{Figures/GraphD3_1.pdf}
	\end{subfigure}
	\quad
	\begin{subfigure}[t]{0.45\linewidth}
		\captionsetup{justification=raggedright,
			singlelinecheck=false,
			position=top}
		\caption{{\small b}}
		\includegraphics[width=\linewidth]{Figures/GraphD3_2.pdf}
	\end{subfigure}
	\quad
	\begin{subfigure}[t]{0.95\linewidth}
		\captionsetup{justification=raggedright,
			singlelinecheck=false,
			position=top}
		\caption{{\small c}}
		\includegraphics[width=\linewidth]{Figures/D3_prova.pdf}
	\end{subfigure}
    \caption{\textbf{Cut 2: Partial decoding of three bulk qubits.} \textbf{(a)} From the cut $\gamma$, one can recover three bulk qubits from seven boundary qubits ($q_1$, $q_2$, $q_3$ $q_4$,  $q_5$, $q_{11}$ and $q_{12}$). 
    \textbf{(b)} The recovery can be done via a unitary operation that can be expressed as a concatenation of three AME states.
    The arrows indicate the direction of the unitary, showing that qubits $q_2$, $q_4$, and $q_{12}$ will contain the information of the bulk qubits A, B, and D after the operator is applied.
    \textbf{(c)} The recovery operator is decomposed in the following circuit, where the box $U^\dagger$ corresponds to the reverse of Fig.~\ref{fig:unitary_rec}(d).
    Additional Hadamard gates $H$ are needed to perform the partial decoding, since we are using the graph code from \cite{Angl_s_Munn__2024} for the encoding.
    After applying the partial decoding circuit, the qubits $q_2$, $q_4$, and $q_{12}$ contain the information of the three bulk qubits.
    }
    \label{fig:D3}
\end{figure}

\begin{figure}
    \centering
    \begin{subfigure}[t]{0.45\linewidth}
		\captionsetup{justification=raggedright,
			singlelinecheck=false,
			position=top}
		\caption{{\small a}}
		\includegraphics[width=\linewidth]{Figures/GraphD4_1-1.pdf}
	\end{subfigure}
	\quad
	\begin{subfigure}[t]{0.45\linewidth}
		\captionsetup{justification=raggedright,
			singlelinecheck=false,
			position=top}
		\caption{{\small b}}
		\includegraphics[width=\linewidth]{Figures/GraphD4_2.pdf}
	\end{subfigure}
	\quad
	\begin{subfigure}[t]{0.95\linewidth}
		\captionsetup{justification=raggedright,
			singlelinecheck=false,
			position=top}
		\caption{{\small c}}
		\includegraphics[width=\linewidth]{Figures/D4_prova2.pdf}
	\end{subfigure}
    \caption{\textbf{Cut 3: Partial decoding of four bulk qubits.}
    \textbf{(a)} The cut $\gamma$ allows one to recover four bulk qubits from eight boundary qubits ($q_1$, $q_2$, $q_3$ $q_4$,  $q_8$, $q_9$, $q_{11}$ and $q_{12}$). 
    \textbf{(b)} The recovery unitary can be expressed as a concatenation of four AME states as shown in Fig.~\ref{fig:D4}(b). The arrows show the direction of the unitary, indicating that qubits $q_2$, $q_3$, $q_9$ and $q_{12}$ will contain the information of the bulk qubits A, B, C and D after the operator is applied.
    \textbf{(c)} The unitary is decomposed in the following circuit, where the box $U^\dagger$ corresponds to the reverse of Fig.~\ref{fig:unitary_rec}(d). Additional Hadamard gates $H$ are needed to perform partial decoding since we used the the graph code derived in~\cite{Angl_s_Munn__2024} for the encoding. After applying the partial decoding circuit, the qubits $q_2$, $q_3$, $q_9$ and $q_{12}$ contain the information of the four bulk qubits.
    }
    \label{fig:D4}
\end{figure}

Next we investigate the effect of introducing an error on a region of the boundary not involved in any of the partial decoding procedures. Specifically, an entangling MS gate is applied between boundary qubits 6 and 10 prior to decoding. The measured expectation values for all three cases, shown in Fig.~\ref{fig:exp_partial_decoding}(g-i), remain consistent with those obtained in the absence of this additional operation. This is to be expected, since the operation was applied outside of the region involved in the decoding for any considered cut. 

\begin{figure}
    \centering\includegraphics[width=\linewidth]{Figures/partial_decoding/Decoding_error_free_y_lim_log_qubit.pdf}
    \includegraphics[width=\linewidth]{Figures/partial_decoding/Decoding_with_error_traced_qubits_610.pdf}
    \includegraphics[width=\linewidth]{Figures/partial_decoding/Decoding_with_error_traced_qubits_611.pdf}
    \caption{\textbf{Partial decoding.} The code is initialized in $\ket{G^{}_{0000}}$ for all experiments. Simulated (orange) and experimentally measured (blue) $Z$-expectation value of the decoded bulk qubits A-D are presented for the implemented partial decoding operations 1-3 (cut 1-3).  The error bars are computed from quantum projection noise and represent one standard deviation of statistical uncertainty. \textbf{(a)-(c)} show results for the decoding of 2 (cut 1), 3 (cut 2), and 4 (cut 3) bulk qubits, respectively. \textbf{(d)-(f)} display the expectation values obtained when an additional MS gate is applied between boundary qubits 6 and 10. As this entangling gate does not act on the qubits involved in the decoding process, the resulting expectation values remain consistent with those in (a)-(c). In contrast, \textbf{(g)-(i)} show the expectation values when an additional MS gate is applied between boundary qubits 6 and 11. This operation entangles qubits of the two regions separated by the cut and therefore, partially prevent the recovery of the logical qubits.}
    \label{fig:exp_partial_decoding}
\end{figure}

In contrast, if a boundary qubit used in the partial decoding process is corrupted or lost from the computational subspace, the logical information may not be recoverable. In this case, we apply an MS gate between qubits 6 and 11, thereby introducing a non recoverable error affecting a boundary qubit used for partial decoding. While neither qubit 6 nor 11 is used in decoding logical qubits A and B, qubit 11 is required for decoding qubit D in the three-qubit decoding case and for decoding qubit B-D in the four qubit case. As shown in Fig.~\ref{fig:exp_partial_decoding}(g), logical qubits A and B remain unaffected and can be decoded reliably when decoding two logical qubits from the boundary (cut 1). Since the MS gate was applied at a boundary qubit not involved in the decoding, this outcome is expected and consistent with previous measurements. When decoding three logical qubits, qubit D can no longer be recovered, while the result for qubits A and B remain consistent with the previous data. This is to be expected, as decoding qubits A and B requires only boundary qubits 1-5,  whereas decoding qubit D requires boundary qubits 1, 11, and 12. A similar behavior is observed when decoding all four qubits A-D: we are able to decode qubit A, as none of its required boundary qubits are affected by the MS gate at the time of decoding. However, qubits B-D cannot be decoded, as their decoding circuits involve boundary qubit 11.
We conclude that we are able to decode the information of those logical qubits, for which the boundary qubits necessary for decoding are not corrupted. 
%
%

{\bf Entropy measurement. ---}
Thus far, we investigated the success of partial decoding by measuring the expectation values of the decoded bulk qubits. We measure the entropy of the boundary belonging to one side of the cut for an encoded product state, in order to verify the Ryu-Takayanagi formula~\eqref{eq:RTformula}.
The code is initialized in $\ket{G^{}_{0000}}$, and state tomography is performed on the corresponding boundary qubits of cuts 1-3 or their complement if that is smaller. The density matrix is extracted from the measurements by performing a maximum likelihood estimation~\cite{MLE_state_tomo_hradil, MLE_state_tomo_banaszek} to exclude unphysical results. We calculate the von Neumann entropy $S$ from the measured density matrix. The entropy for decoding 2, 3, and 4 bulk qubits is 3.96(2) [Ideal: 3], 4.13(2) [Ideal: 4], and 3.818(10) [Ideal: 4], respectively. The ideal entropy is given by the number of contracted indices from the cut $\gamma$ [see Eq.~\eqref{eq:RTformula}]. The density matrices of all three cuts are displayed in Fig.~\ref{fig:Tomography_data_all_cuts} in Supp. Mat.~\ref{SS:Partial_decoding_tomography}. We note that the measured entropy deviates more strongly from the ideal case for the first cut. We attribute this to the finite upper bound in entropy given by the structure of the problem. The first cut is expected to be significantly further from the maximum and is thus relatively more susceptible to experimental imperfections in the preparation of the logical zero state $\ket{G_{0000}}$ than the other two cuts.
The measurement procedure for the second and third cut differs from the first: we compute the entropy of the complement of the decoding subset rather than that of the subset itself. Consequently, the number of qubits involved in state tomography is reduced from 7 to 5 and from 8 to 4 for cuts 2 and 3, respectively. The expected outcome of the state tomography of cut 3 corresponds to a maximally mixed state, since it is carried out on the complement of the decoding subset. Imperfections in the entanglement lead to residual pure-state contributions, reducing the measured entropy with respect to the ideal case.

\label{sec:encoding}

\begin{figure}
	\centering
	\begin{subfigure}[t]{0.24\linewidth}
		\captionsetup{justification=raggedright,
			singlelinecheck=false,
			position=top}
		\caption{{\small a}}
		\begin{tikzpicture}[line cap=round, line join=round, font=\fontsize{8pt}{9.6pt}\selectfont, scale=0.73]

  \def\R{0.8}      
  \def\gap{2.}     
  \def\tleg{1.38}   
  \def\tlabel{1.8} 
  \def\rotA{90}     
  \def\rotB{270}    

  \definecolor{mycolor}{RGB}{86,180,233}
  
  \coordinate (CA) at (0,0);
  \coordinate (CB) at (0,-\gap);

  \foreach \k in {0,...,6}{
    \coordinate (A\k) at ({\R*cos(\rotA-360*\k/7)},{\R*sin(\rotA-360*\k/7)});
  }
  \draw[fill=mycolor,draw=black,thick]
    (A0)--(A1)--(A2)--(A3)--(A4)--(A5)--(A6)--cycle;
  \node at (CA) {A};

  \foreach \k in {0,...,6}{
    \coordinate (B\k) at ({\R*cos(\rotB-360*\k/7)},{-\gap+\R*sin(\rotB-360*\k/7)});
  }
  \draw[fill=mycolor,draw=black,thick]
    (B0)--(B1)--(B2)--(B3)--(B4)--(B5)--(B6)--cycle;
  \node at (CB) {B};

  \draw[thick] ($(CA)+(0,-\R+0.09)$) -- ($(CB)+(0,\R-0.09)$);


  \foreach \k/\lab in {4/$1_A$,5/$2_A$,6/$3_A$,0/$4_A$,1/$5_A$,2/$6_A$}{
    \pgfmathtruncatemacro{\kp}{mod(\k+1,7)}
    \coordinate (Amid\k) at ($(A\k)!0.5!(A\kp)$);
    \draw (Amid\k) -- ($(CA)!\tleg!(Amid\k)$);
    \node at ($(CA)!\tlabel!(Amid\k)$) {\lab};
  }
  \draw (A3) -- ($(A3)!0.75!(CA)$);

  \foreach \k/\lab in {4/$1_B$,5/$2_B$,6/$3_B$,0/$4_B$,1/$5_B$,2/$6_B$}{
    \pgfmathtruncatemacro{\kp}{mod(\k+1,7)}
    \coordinate (Bmid\k) at ($(B\k)!0.5!(B\kp)$);
    \draw (Bmid\k) -- ($(CB)!\tleg!(Bmid\k)$);
    \node at ($(CB)!\tlabel!(Bmid\k)$) {\lab};
  }
  \draw (B3) -- ($(B3)!0.75!(CB)$);

\end{tikzpicture}
	\end{subfigure}
	\quad
    \begin{subfigure}[t]{0.32\linewidth}
		\captionsetup{justification=raggedright,
			singlelinecheck=false,
			position=top}
		\caption{{\small b}}        \definecolor{mytickcolor}{rgb}{0,0.45,0.70}

\begin{tikzpicture}[line cap=round, line join=round, font=\fontsize{8pt}{9.6pt}\selectfont, scale=0.3]

    \coordinate (Start) at (0,0);
    \coordinate (offset) at (2,-4);
    \tikzmath{real \L; \L = 5;}

    \coordinate (1A) at (0,{\L});
    \coordinate (3A) at ({\L},{\L});
    \coordinate (2A) at ($(1A)!0.5!(3A)$);
    \coordinate (7A) at ($(2A) - (0, {\L})$);

    \coordinate (4A) at ($(1A)!0.5!(7A)$);
    \coordinate (6A) at ($(3A)!0.5!(7A)$);
    \coordinate (5A) at ($ 1/3*(1A) + 1/3*(3A) + 1/3*(7A) $);

    \coordinate (1B) at (offset);
    \coordinate (3B) at ($(offset) + ({\L}, 0)$);
    \coordinate (2B) at ($(1B)!0.5!(3B)$);
    \coordinate (7B) at ($(2B)+(0, {\L})$);

    \coordinate (4B) at ($(1B)!0.5!(7B)$);
    \coordinate (6B) at ($(3B)!0.5!(7B)$);
    \coordinate (5B) at ($ 1/3*(1B) + 1/3*(3B) + 1/3*(7B) $);

  \fill[blue!60, draw=black, thick]  (1A) -- (2A) -- (5A) -- (4A) -- cycle;     
  \fill[green!80!black, draw=black, thick] (2A) -- (3A) -- (6A) -- (5A) -- cycle;     
  \fill[red!50, draw=black, thick]   (5A) -- (6A) -- (7A) -- (4A) -- cycle;     
  \fill[red!50, draw=black, thick]   (7B) -- (6B) -- (5B) -- (4B) -- cycle;     
  \fill[green!80!black, draw=black, thick] (4B) -- (5B) -- (2B) -- (1B) -- cycle;     
  \fill[blue!60, draw=black, thick]  (5B) -- (6B) -- (3B) -- (2B) -- cycle;     
  \draw[dotted, very thick] (7A) -- (7B);

  \foreach \p in {1A,2A,3A,4A,5A,6A,7A,7B,1B,2B,3B,4B,5B,6B}{
  \fill[black] (\p) circle (7.3pt);}

  \path (1A) node[above left]  {$1_A$};
  \path (2A) node[above]       {$2_A$};
  \path (3A) node[above right] {$3_A$};

  \path (4A) node[left]        {$4_A$};
  \path (5A) node[below=2.5] {$5_A$};
  \path (6A) node[right]       {$6_A$};

  \path (7A) node[left]        {$7_A$};
  \path (7B) node[right]       {$7_B$};

  \path (4B) node[left]        {$4_B$};
  \path (5B) node[above=2.5]       {$5_B$};
  \path (6B) node[right]       {$6_B$};

  \path (1B) node[below left]  {$1_B$};
  \path (2B) node[below]       {$2_B$};
  \path (3B) node[below right] {$3_B$};
\end{tikzpicture}
	\end{subfigure}
	\quad
    \begin{subfigure}[t]{0.32\linewidth}
		\captionsetup{justification=raggedright,
			singlelinecheck=false,
			position=top}
		\caption{{\small c}}
        \begin{tikzpicture}[x=1cm, y=1cm,,line cap=round, line join=round, font=\fontsize{8pt}{9.6pt}\selectfont, scale=0.97]
  \def\Rh{0.6} 
  \def\yTop{1.4}
  \def\yBot{-1.4}
  \def\xLeft{-1.}
  \def\xMid{0.0}
  \def\xRight{1.}

  \coordinate (H0) at (0,\Rh);                                 
  \coordinate (H1) at ({-\Rh*sqrt(3)/2},{\Rh/2});              
  \coordinate (H2) at ({-\Rh*sqrt(3)/2},{-\Rh/2});             
  \coordinate (H3) at (0,-\Rh);                                
  \coordinate (H4) at ({\Rh*sqrt(3)/2},{-\Rh/2});              
  \coordinate (H5) at ({\Rh*sqrt(3)/2},{\Rh/2});               

  \coordinate (TL) at (\xLeft,\yTop);
  \coordinate (TM) at (\xMid,\yTop);
  \coordinate (TR) at (\xRight,\yTop);
  \coordinate (BL) at (\xLeft,\yBot);
  \coordinate (BM) at (\xMid,\yBot);
  \coordinate (BR) at (\xRight,\yBot);

  \draw[fill=red!50, draw=black, thick] (H0)--(H1)--(H2)--(H3)--(H4)--(H5)--cycle;

  \draw[fill=blue!60, draw=black, thick] (TL)--(TM)--(H0)--(H1)--cycle; 
  \draw[fill=green!80!black, draw=black, thick] (TM)--(TR)--(H5)--(H0)--cycle; 

  \draw[fill=green!80!black, draw=black, thick] (BL)--(BM)--(H3)--(H2)--cycle; 
  \draw[fill=blue!60, draw=black, thick] (BM)--(BR)--(H4)--(H3)--cycle;  

  \foreach \P in {TL,TM,TR,BL,BM,BR,H1,H0,H5,H2,H3,H4}{
    \fill[black] (\P) circle (2.2pt);
  }

  \node[anchor=south west] at (TL) {$1_A$};
  \node[anchor=south]      at (TM) {$2_A$};
  \node[anchor=south east] at (TR) {$3_A$};

  \node[anchor=north west] at (BL) {$1_B$};
  \node[anchor=north]      at (BM) {$2_B$};
  \node[anchor=north east] at (BR) {$3_B$};

  \node[anchor=east]  at (H1) {$4_A$};
  \node[anchor=north, yshift=-2] at (H0) {$5_A$};
  \node[anchor=west]  at (H5) {$6_A$};

  \node[anchor=east]  at (H2) {$4_B$};
  \node[anchor=south, yshift=2] at (H3) {$5_B$};
  \node[anchor=west]  at (H4) {$6_B$};

\end{tikzpicture}
	\end{subfigure}
	\quad
	\begin{subfigure}[t]{\linewidth}
		\captionsetup{justification=raggedright,
			singlelinecheck=false,
			position=top}
		\caption{{\small d}}
		\includegraphics[width=\linewidth]{Figures/heptagon_multifigure/Multifigure_heptagon_init_H_y_lim.pdf}
	\end{subfigure}
	 \caption{\textbf{Holographic heptagon code.}
    \textbf{(a)} Small instance of the heptagon code we implement in the experiment. \textbf{(b)} Two $[\![7, 1, 3]\!]$ color codes are contracted along qubits 7A and 7B (dashed line) to obtain the \textbf{(c)} $[\![12, 2, 3]\!]$ Heptagon code. 
    Note that the color code representation in (c) indicates the stabilizer operators, 
    but does not preserve the inherent bulk-boundary geometry from which it originates, that is visible in (a).
    \textbf{(d)}~The simulated (orange) and measured (blue) expectation values of the stabilizer generators and logical operators for the logical $\left|\overline{\text{+0}}\right>_\text{hep}-$state. The error bars are computed from quantum projection noise and represent one standard deviation of statistical uncertainty. The logical operators are post-selected depending on the sign of the weight-6 stabilizers in Eq.~\eqref{eq:Hadamard_correction_heptagon} mapped onto two ancilla qubits.} 
    \label{fig:heptagon}
\end{figure}

\subsection{Heptagon Code}\label{subsection_pentagon_encoding}
While we were investigating holographic features like partial decoding using the pentagon code, we focus here more on the aspect of quantum error detection and subsequently correction, harnessing the efficient mid-circuit capabilities of the heptagon code. In the following, we investigate the initialization of the code as well as implementing single qubit and two-qubit gates.

{\bf Logical operations. ---}
We focus on a small instance of the holographic heptagon code, shown in Fig.~\ref{fig:heptagon}, encoding two logical qubits, labeled A and B, on 12 physical boundary qubits.  Here, we aim to implement the logical zero state $\ket{\overline{00}}_\text{hep}$ of the holographic heptagon code. The success of the implementation is measured by evaluating the expectation values of the stabilizer and the logical operators, which are given in Supp. Mat.~\ref{SS:Stabilizer_generators_logical_operators_heptagon}. We find an average expectation value of all stabilizer generators and logical operators of 0.71(10) [sim: 0.62(14)]. All $X/Z$-type stabilizer generators and logical operators are calculated simultaneously for each shot in their respective measurement basis. A total of $2\cdot10^4$ shots were taken per measurement basis.

The two logical qubits can be entangled by applying a transversal MS gate or by applying a Hadamard on the control qubit and sub-sequentially a transversal CNOT gate. The average expectation value of the stabilizer generators and logical operators after applying the MS gate is 0.48(15) [sim: 0.37(9)]. The measurements and the simulations are displayed in Fig.\ref{fig:Heptagon_init}(a-c).

{\bf Logical Hadamard gate. ---}
The implementation of the logical Hadamard gate is not straightforward. Applying a logical Hadamard gate on either one of the logical qubits will switch the weight 4 stabilizer generators from the X-basis to the Z-basis and vice versa for the corresponding logical qubit, while the stabilizer generators for the other qubit stay the same. However, this is not true for the weight 6 stabilizer generators. Considering the product of the weight 6 stabilizer generators before
\begin{align}
    g_5g_{10} =& (-i)Y_{4_A}(-i)Y_{5_A}(-i)Y_{6_A}iY_{4_B}iY_{5_B}iY_{6_B} \notag \\
    =& Y_{4_A}Y_{5_A}Y_{6_A}Y_{4_B}Y_{5_B}Y_{6_B}
\end{align}
and after the application of the logical Hadamard gate
\begin{align}
    g'_5g'_{10} =& (-i)Y_{4_A}(-i)Y_{5_A}(-i)Y_{6_A}(-i)Y_{4_B}(-i)Y_{5_B}(-i)Y_{6_B} \notag\\
    =& -Y_{4_A}Y_{5_A}Y_{6_A}Y_{4_B}Y_{5_B}Y_{6_B}
\end{align}
shows a change of sign after the application of the logical Hadamard gate. Hence, if we measure $g_5$ and $g_{10}$ on the state $H_A\ket{\psi}$, they will always have opposite signs. The induced error can be fixed by a round of QEC. Depending on which stabilizer generator has the negative sign, one of the following corrections $C$ can be applied
\begin{align}\label{eq:Hadamard_correction_heptagon}
    g_5 &= +1 g_{10} = -1 \rightarrow C = X_{4_B}X_{5_B}X_{6_B}, \notag \\
    g_5 &= -1 g_{10} = +1 \rightarrow C = Z_{4_B}Z_{5_B}Z_{6_B}.
\end{align}
We choose to map the sign of the two stabilizer generators $g_5$ and $g_{10}$ to two ancilla qubits in order to experimentally demonstrate the Hadamard gate. We initialize the logical $\ket{\overline{00}}_\text{hep}$ state on the 6 leftmost and rightmost ions of the 16 ions available in the linear Paul trap and apply a logical Hadamard on logical qubit A. 
The induced error on qubit B is detected by performing a parity measurement on the ancilla qubits, which are located next to the ions used to initialize the logical qubits, and consequentially corrected via Pauli-frame update~\cite{gottesman1997stabilizercodesquantumerror, Roffe03072019} in post-processing. The average expectation value of the stabilizer generators and the logical operators $\overline{X}_A$ and $\overline{Z}_B$ is $0.56(15)$ [sim: $0.5(2)$]. The expectation values of all stabilizers and logical operators after performing the Pauli-frame update are shown in Fig.~\ref{fig:heptagon}(c).

\section{Outlook}

Recent experiments have largely focused  on the class of
{\em topological} codes, 
which can correct for all topologically trivial errors.
In contrast, 
holographic codes are of interest due to their {\em geometric} properties,
where partial recovery of the encoded quantum information is possible even if large parts of the system are inaccessible or affected by errors.
This scenario, for example, is relevant when dealing with burst errors, where multiple consecutive qubits are affected by errors~\cite{fan2025characterizingbursterrorcorrection,doi:10.1143/JPSJ.69.3540}.
Indeed, it has been shown that holographic codes can span a wide range of rates, in particular achieving finite rates and competitive error thresholds~\cite{
steinberg2024farperfectquantumerror}.
However, current constructions require syndrome measurements of high weights, and it is an open question whether it is possible to create holographic codes 
that require low- or constant-weight stabilizers only that are of practical use. 

Our work demonstrates the realization of a holographic quantum code with bulk reconstruction. 
In a holographic code, the ability to perform partial recovery is directly linked to geometry and entanglement.
It is a natural question what kind of geometries of many-body entangled states and associated cuts might support similar quantum error correction and recovery capabilities. While initial steps in this direction suggest monogamy of entanglement as a key constraint~\cite{Hayden_2013, Umemoto_2018, Bao_2015}, much remains to be understood. Our implementation thus provides a key step towards understanding and simulating holography and bulk-boundary duality, and paves the way for the use of holographic codes in state-of-the-art quantum computers.

\vspace{0.5cm}
\section*{Acknowledgements}
We thank 
Valentin Kasper, for fruitful discussions. GM and FH were supported by the Foundation for Polish Science through TEAM-NET (POIR.04.04.00-00-17C1/18-00). 
GM was also supported by NCN grant no. 2024/53/B/ST2/04103.
FH was also funded in whole or in part by the National Science Centre, Poland 2024/54/E/ST2/00451
and by the Polish National Agency for Academic Exchange under the Strategic Partnership Programme grant BNI/PST/2023/1/00013/U/00001.
For the purpose of Open Access,
the author has applied a CC-BY public copyright licence to any Author Accepted Manuscript (AAM) version arising from this submission.
This research was funded by the European Research Council (ERC, QUDITS, 101039522). Views and opinions expressed are however those of the author(s) only and do not necessarily reflect those of the European Union or the European Research Council Executive Agency. Neither the European Union nor the granting authority can be held responsible for them. We also acknowledge support by the Austrian Science Fund (FWF) through the SFB BeyondC (FWF Project No. F7109) and the EU-QUANTERA project TNiSQ (N-6001), and by the IQI GmbH, and by the European Union’s Horizon Europe research and innovation programme under grant agreement No 101114305 (“MILLENION-SGA1” EU Project). 
RH and TMS acknowledge support from the Australian Research Council Centres of Excellence for Engineered Quantum Systems (Grants CE110001013 and  CE170100009).

Note added.– While completing this work, we became aware of Ref.~\cite{biswas2026observationgravitylikesignaturesholographic} by Biswas, D. et al., studying the HaPPY code in the context of quantum gravity on a trapped-ion quantum computer.

\section*{Conflicts of interest}
T.M. is CEO at Alpine Quantum Technologies GmbH, a commercially oriented quantum computing company.

\section*{Author contributions}
A.S. led and conducted the experimental aspects of the work and contributed to the experimental parts in writing. 
G. A. M. developed the theoretical framework of the pentagon code, and contributed to its writing.
R.F., I.P. contributed to building and maintaining the experimental apparatus. 
M. M. contributed to implementing and debugging the experiments. 
R. H., G. B., and T. S. developed the theoretical framework of the heptagon code, including the theoretical formulation of the Hadamard gate.
F. H., M. R., R.B., and T.M. conceptualized the research and implementation. F.H. and M.R. coordinated the research activities across the institutions.
All the authors contributed to the analysis and interpretation of results, and the writing of the manuscript.

\section*{Data availability statement}
Data and analysis scripts can be made available on request. 
Please contact alex.steiner@uibk.ac.at.

\bibliographystyle{apsrev4_2_mod.bst}
\bibliography{bibliography}

\newpage
\clearpage

\appendix

\setcounter{figure}{0}
\setcounter{equation}{0}
\setcounter{table}{0}
\setcounter{section}{0}

\makeatletter
\renewcommand{\theequation}{\thesection\@arabic\c@equation}
\renewcommand{\thefigure}{\thesection\@arabic\c@figure}
\renewcommand{\thetable}{\thesection\@arabic\c@table}
\renewcommand{\thesubsection}{\thesection\@arabic\c@subsection}
\renewcommand{\p@subsection}{}
\makeatother
\footnotesize

\renewcommand\appendixname{Methods}
\renewcommand\appendixpagename{Methods}

\renewcommand{\theequation}{M\arabic{equation}}
\renewcommand{\thefigure}{M\arabic{figure}}
\renewcommand{\thetable}{M\Roman{table}}
\renewcommand{\thesection}{M\Roman{section}}
\renewcommand{\thesubsection}{M\Roman{section} \Roman{subsection}}

\renewcommand{\theHequation}{M\arabic{equation}}
\renewcommand{\theHfigure}{M\arabic{figure}}
\renewcommand{\theHtable}{M\Roman{table}}
\renewcommand{\theHsection}{M\Roman{section}}
\renewcommand{\theHsubsection}{M\Roman{section} \Roman{subsection}}

\setcounter{equation}{0}
\setcounter{figure}{0}
\setcounter{table}{0}
\setcounter{section}{0}

\renewcommand\appendixname{Supplementary}
\renewcommand\appendixpagename{Supplementary}

\renewcommand{\theequation}{S\arabic{equation}}
\renewcommand{\thefigure}{S\arabic{figure}}
\renewcommand{\thetable}{S\Roman{table}}
\renewcommand{\thesection}{S\Roman{section}}
\renewcommand{\thesubsection}{S\Roman{section} \Roman{subsection}}

\renewcommand{\theHequation}{S\arabic{equation}}
\renewcommand{\theHfigure}{S\arabic{figure}}
\renewcommand{\theHtable}{S\Roman{table}}
\renewcommand{\theHsection}{S\Roman{section}}
\renewcommand{\theHsubsection}{S\Roman{section} \Roman{subsection}}

\setcounter{equation}{0}
\setcounter{figure}{0}
\setcounter{table}{0}
\setcounter{section}{0}
\normalsize
\makeatletter
\renewcommand*\env@matrix[1][*\c@MaxMatrixCols c]{%
	\hskip -\arraycolsep
	\let\@ifnextchar\new@ifnextchar
	\array{#1}}
\makeatother

\newcommand{\BB}{{\partial \mathcal{B}}}
\newcommand{\MM}{\mathcal{B}}

\newcommand\overlinelong[1]{\accentset{\rule{0.9em}{.8pt}}{#1}}
\newcommand{\CZfbar}[0]{{\operatorname{\overlinelong{CZ}}}}

\newcommand{\pH}[0]{{H}}
\newcommand{\pS}[0]{{S}}
\newcommand{\Hbar}[0]{{\overline{H}}}
\newcommand{\Sbar}[0]{{\overline{S}}}
\newcommand{\Ybar}[0]{{\overline{Y}}}

\newcommand{\Un}[0]{U}
\newcommand{\Vop}[0]{V}
\newcommand{\Wop}[0]{W}
\newcommand{\CZ}[0]{\operatorname{CZ}}
\newcommand{\CX}[0]{\operatorname{CX}}
\newcommand{\CNOT}[0]{\operatorname{CNOT}}
\newcommand{\SWAP}[0]{\operatorname{SWAP}}
\newcommand{\pX}[0]{X}
\newcommand{\pZ}[0]{Z}
\newcommand{\pY}[0]{Y}
\newcommand{\Had}[0]{H}
\newcommand{\CZbar}[0]{\operatorname{\overline{CZ}}}
\newcommand{\CXbar}[0]{\operatorname{\overline{CX}}}
\newcommand{\Xbar}[0]{\overline{X}}
\newcommand{\Zbar}[0]{\overline{Z}}
\newcommand{\bmR}[1]{\bm{\mathrm{#1}}}

\newcommand{\N}{\mathds{N}}
\newcommand{\F}{\mathds{F}}
\newcommand{\Q}{\ma\thds{Q}}

\newcommand{\EE}{\mathcal{E}}
\newcommand{\HH}{\mathcal{H}}
\newcommand{\KK}{\mathcal{K}}
\newcommand{\XX}{\mathcal{X}}
\newcommand{\YY}{\mathcal{Y}}
\newcommand{\UU}{\mathcal{U}}

\newcommand{\PP}{\mathcal{P}}
\newcommand{\DD}{\mathcal{D}}
\newcommand{\oper}[1]{\operatorname{#1}}

\renewcommand{\a}{\alpha}
\renewcommand{\b}{\beta}
\renewcommand{\SS}{\mathcal{S}}
\newcommand{\II}{\mathcal{I}}
\renewcommand{\AA}{\mathcal{A}}
\newcommand{\nh}{\lfloor \tfrac{n}{2} \rfloor}
\newcommand{\Ml}{M^{(\ell)}}
\newcommand{\Al}{A^{(\ell)}}
\newcommand{\Bl}{B^{(\ell)}}

\newcommand{\RR}{\mathcal{R}}
\newcommand{\QQ}{\mathcal{M}}
\newcommand{\nn}{\nonumber}

\newcommand{\nothing}[1]{#1}

\makeatletter

\@removefromreset{equation}{section}
\@removefromreset{figure}{section}
\@removefromreset{table}{section}

\renewcommand{\theequation}{S\arabic{equation}}
\renewcommand{\thefigure}{S\arabic{figure}}
\renewcommand{\thetable}{S\arabic{table}}

\makeatother


\section{State initialization, logical operators, and stabilizer generators of the pentagon code}\label{ss:Stabilizer_generators_logical_operators_HAPPY}

Following Ref.~\cite{Angl_s_Munn__2024}, we now outline how to prepare the logical zero state $\ket{G^{}_{0000}}$ of the holographic pentagon code experimentally. First, we initialize 12 physical qubits in the $\ket{+}$ state. Then, we apply 28 $\text{CZ}$-gates on the physical qubits in order to create the logical ground state:
\begin{equation}
    \ket{G^{}_{0000}} = \prod_{(u, v)\in\text{E}_{\partial \mathcal{B}}}\text{CZ}_{uv}\ket{+}^{\otimes12}.
\end{equation}
 Here, $\text{E}_{\partial B}$ is the set containing the 28 $\text{CZ}$-gates applied on the respective physical qubits on the boundary (denoted $\partial B$) of the holographic code. A graphical demonstration of $\text{E}_{\partial B}$ is given in Fig.~\ref{fig:MinimalGraph}(b).
\begin{align*}
    \text{E}_{\partial B} =&\{(2, 1), (2, 4), (2, 5), (2, 6), (3, 4), (3, 5), (3, 6),\\&(5, 4), (5, 7), (5, 8), (5, 9), (6, 7), (6, 8), (6, 9),\\&(8, 7), (8, 10), (8, 11), (8, 12), (9, 10), (9, 11),\\&(9, 12), (11, 10), (11, 1), (11, 2), (11, 3),\\&(12, 1), (12, 2), (12, 3)\}.
\end{align*}

The logical operators of the pentagon code are defined as follows,

\begin{tabbing}
    \hspace{1.5cm}\=$\overline{Z}_1=Z_{1}X_{2}X_{3}$, \hspace{0.7cm}\=$\overline{Z}_2=Z_{4}X_{5}X_{6}$,\\\>$\overline{Z}_3=Z_{7}X_{8}X_{9}$, \>$\overline{Z}_4=Z_{10}X_{11}X_{12}$\\\\
    \>$\overline{X}_1=-Y_{1}Z_{2}Y_{3}$, \>$\overline{X}_2=-Y_{4}Z_{5}Y_{6}$,\\ \>$\overline{X}_3=-Y_{7}Z_{8}Y_{9}$, \>$\overline{X}_4=-Y_{10}Z_{11}Y_{12}$\\\\
    \>$\overline{Y}_1=-X_{1}Y_{2}Z_{3}$, \>$\overline{Y}_2=-X_{4}Y_{5}Z_{6}$,\\ \>$\overline{Y}_3=-X_{7}Y_{8}Z_{9}$, \>$\overline{Y}_4=-X_{10}Y_{11}Z_{12}$
\end{tabbing}

The stabilizer generators of the subspace are
\begin{tabbing}
$g_1 = X_1 Z_2 X_3 Z_4 Z_5 Z_6$, \hspace{0.7cm} \=$g_2=X_4 Z_5 X_6 Z_7 Z_8 Z_9$,\\ $g_3 = X_7 Z_8 X_9 Z_{10} Z_{11} Z_{12}$, \>$g_4 = Z_{1} Z_{2} Z_{3} X_{10} Z_{11} X_{12}$\\$g_5 = Y_1 X_2 Z_3 X_4 X_5 Y_6$, \>$g_6=Y_4 X_5 Z_6 X_7 X_8 Y_9$,\\$g_7 = Y_7 X_8 Z_9 X_{10} X_{11} Y_{12}$, \>$g_8 = X_{1} X_{2} Y_{3} Y_{10} X_{11} Z_{12}$\\
\end{tabbing}

\section{Initialization of the holographic heptagon code}\label{SS:Init_heptagon}
The holographic heptagon code is initialized on the edges of the register with 16 qubits $q$ as following: $1_A\rightarrow q_1$, $2_A\rightarrow q_2$, $3_A\rightarrow q_3$, $4_A\rightarrow q_4$, $5_A\rightarrow q_5$, and $6_A\rightarrow q_0$ for logical qubit A and $1_B\rightarrow q_{11}$, $2_B\rightarrow q_{12}$, $3_B\rightarrow q_{13}$, $4_B\rightarrow q_{14}$, $5_B\rightarrow q_{15}$, and $6_B\rightarrow q_{10}$ for logical qubit B. The ancilla qubits required for implementing the Hadamard gate are $A_1=q_6$ and $A_2=q_9$.
A total of 5 Hadamard gates and 17 CNOT gates are required to initialize the logical zero state $\ket{\overline{00}}_\text{hep}$. The process is described by the following equation:
\begin{equation}
    \ket{\overline{00}}_\text{hep}= \left(\prod_{(u, v) \in G}\text{CNOT}_{uv}\right)\prod_{k\in K}H_k\ket{0}^{\otimes12},
\end{equation}
where $K$ is a set containing the physical qubits on which the Hadamard gate is applied and $G$ is a set containing information on which physical qubits the CNOT gate is applied: 
\begin{align}
    K =& \left\{0, 1, 3, 7, 9\right\},\notag\\
    G=&\{(0, 4),(7, 8),(1, 2),(0,5),(1,4),(0,6),(1,5),\\
    &(9,6),(0,10),(0,11),(3,0),(7,10),(3,2),\notag\\
    &(9,8),(3,4),(7,11),(9,10)\}.\notag
\end{align}

\section{Simulation Methods}\label{SS:Simulation_methods}
We model circuit errors as depolarizing errors, which reproduces well the experimentally observed infidelities despite its conceptual simplicity, which does not take the microscopic physical processes underlying noisy gates and operations in the ion trap into account explicitly. Noise is applied in simulations by randomly placing Pauli errors $E$ according to the experimental physical error rates after every single-qubit operation as well as two-qubit gates. The errors can be 
\begin{align}
    E_1 &\in \{\sigma_k, \forall k \in \{1, 2, 3\}  \}\,,\nn \\
    E_2 &\in \{\sigma_k \otimes \sigma_l, \forall k, l \in \{0, 1, 2, 3\}  \} \backslash \{I\otimes I\} \,,
\end{align}
where $\sigma_k = \{I, X, Y, Z\}$ with $k=0, 1, 2, 3$ are the Pauli matrices. The error channels for our depolarizing noise
model read
\begin{align}
    \mathcal{E}_1(\rho)&=(1-p_1)\rho + \frac{p_1}{3}\left(X\rho X+Y\rho Y+Z\rho Z\right)\,,\nn\\
    \mathcal{E}_2(\rho)&=(1-p_2)\rho + \frac{p_2}{15}\left[\sum_{i, j=0}^{3}(\sigma_i\rho\sigma_j-\rho)\right]\,,
\end{align}
 so that any single-qubit error is applied uniformly to the ideal operation with equal probability $p_1/3$ and the single-qubit operation is executed ideally with probability $1 - p_1$; two-qubit errors are applied uniformly after the ideal two-qubit gates with equal probability $p_2/15$ and any two-qubit gate is executed ideally with probability $1-p_2$. In all simulations we used physical error rates of
\begin{align}\label{Eq:gate_infidelity}
    p_1&=0.005\\
    p_2&=0.025\notag
\end{align}
for the corresponding operations \cite{pogorelov_compact_ion_2021}.

\section{Additional logical operators of the pentagon code}\label{ss:Additional_logical_operators_HaPPY}
In this section, we obtain additional logical operators 
(combinations of the Hadamard $H$, Phase $S$ and CZ gates) 
from the graph code proposed in~\cite{Angl_s_Munn__2024}.
We recall that such code corresponds to a small instance of the holographic code on $12$ qubits, labeled as shown in Fig.~\ref{fig:unitary_rec}(c). 
The obtained operators are used to prepare the following two logical single-qubit states, $\ket{G^{}_{+000}}$ and $\ket{G^{}_{+_y 000}}$.
Furthermore, we show how to prepare a logical entangled two-qubits state,
${1/\sqrt{2}}\left(\ket{G^{}_{+0++}} + \ket{G^{}_{-1++}}\right)$.
The experimental realization of these states is done in Section~\ref{subsection_happy_gates}.
The required logical gates are
the logical Hadamard gate $\Hbar_A$ and the logical controlled $Z$ gate $\CZbar_{\operatorname{AB}}$ that are defined to act on the code subspace $\CC \subseteq \HH_{\BB}$ in the same way as $H_A$ and $\CZ_{AB}$ act on $\HH_{\MM}$. 
Due to the symmetry of the graph code, we also obtain two logical gates that act locally on the boundary qubits and rotate the logical basis.

\bigskip
\noindent {\bf Invariant local gates.} 
Here, we derive two logical gates that rotate the logical basis.
First we describe the action of three one-qubit gates.

The Hadamard and Phase gates are
\begin{align}
    H=  \frac{1}{\sqrt{2}}\begin{pmatrix}
        1 & -1 \\
        -1 & 1
    \end{pmatrix} \,, \quad 
    S=	\begin{pmatrix}
        1 & 0 \\
        0 & -i
    \end{pmatrix} \,.
\end{align}
These act by conjugation 
as $HXH^\dagger = Z$, $HZH^\dagger = X$, $HYH^{\dagger} = -Y$; and $SXS^\dagger= Y$,  $SZS^\dagger = Z$ and  $SYS^\dagger = -X$.

Combining $H$ and $S$ gates, they act by conjugation as,

\begin{equation}
    \begin{matrix}
        HS: & \pX &\rightarrow& - \pY\,,  &&&
        SH: & \pX &\rightarrow& \pZ\,, \\
        &\pY &\rightarrow& -\pZ\,,&&&
        &\pY &\rightarrow&  \pX\,, \\
        &\pZ &\rightarrow& \phantom{-}\pX \,,&&&
        &\pZ &\rightarrow  &\pY\,.
    \end{matrix}
\end{equation}
Taking that into account, we define the following gate $R$ acting on the boundary,
\begin{align}\label{eq:R_rot}
    \begin{matrix}[cc ccc ccc]
        &&\text{\footnotesize1}&\text{\footnotesize2}&\text{\footnotesize3}&\;
        \text{\footnotesize4}&\text{\footnotesize5}&\text{\footnotesize6}\\
        R&=&
        \pS\pH\pY &\pS\pH\pZ& \pH\pS&\,
        \pS\pH\pY &\pS\pH\pZ& \pH\pS \\ 
        &&\text{\footnotesize7}&\text{\footnotesize8}&\text{\footnotesize9}&\; \text{\footnotesize10}&\!\text{\footnotesize11}&\!\!\!\!\text{\footnotesize12}\\
        &&\pS\pH\pY &\pS\pH\pZ& \pH\pS&\,
        \pS\pH\pY &\pS\pH\pZ& \pH\pS \,. 
    \end{matrix}
\end{align}
One can check that $R$ leaves the code subspace invariant. 
The generators of the code subspace $g_1,\dots, g_8$ can be found in Section~\ref{ss:Stabilizer_generators_logical_operators_HAPPY}.
This operator rotates by conjugate action 
the logical gates in the following way:

\begin{equation}
    \begin{aligned}
        {\Xbar_A} \xrightarrow{{R}} - {\Ybar_A} \xrightarrow{{R}} {\Zbar_A} \xrightarrow{{R}} {\Xbar_A} \\
        {\Xbar_B} \xrightarrow{{R}} -{\Ybar_B} \xrightarrow{{R}} {\Zbar_B} \xrightarrow{{R}} {\Xbar_B} \\
        {\Xbar_C} \xrightarrow{{R}} - {\Ybar_C} \xrightarrow{{R}} {\Zbar_C} \xrightarrow{{R}} {\Xbar_C} \\
        {\Xbar_D} \xrightarrow{{R}} - {\Ybar_D} \xrightarrow{{R}} {\Zbar_D} \xrightarrow{\operatorname{R}} {\Xbar_D}
    \end{aligned}
\end{equation}
Thus, applying $\operatorname{R}$ to the boundary is a equivalent to applying $\pH\pS \otimes \pH\pS  \otimes \pH\pS \otimes \pH\pS $ to the bulk qubits.  
Similarly, define the invariant gate $\operatorname{L}$,

\begin{equation}\label{eq:L_rot}
    \begin{matrix}[cc ccc ccc]
        &&\text{\footnotesize1}&\text{\footnotesize2}&\text{\footnotesize3}&\;
        \text{\footnotesize4}&\text{\footnotesize5}&\text{\footnotesize6}\\
        \operatorname{L}&=&
        \pH\pS\pX &\pH\pS\pY & \pS\pH&\,
        \pH\pS\pX &\pH\pS\pY & \pS\pH \\
        &&\text{\footnotesize7}&\text{\footnotesize8}&\text{\footnotesize9}&\;\text{\footnotesize10}&\!\text{\footnotesize11}&\!\!\!\!\text{\footnotesize12} \\
        &&\pH\pS\pX &\pH\pS\pY & \pS\pH&\,
        \pH\pS\pX &\pH\pS\pY & \pS\pH \,. \\
    \end{matrix}
\end{equation}
Again, this operator rotates the logical gates,
\begin{equation}
    \begin{aligned}
        {\Xbar_A} \xleftarrow{{L}} {\Ybar_A} \xleftarrow{{L}} {\Zbar_A} \xleftarrow{\operatorname{L}} {\Xbar_A} \\
        {\Xbar_B} \xleftarrow{{L}} {\Ybar_B} \xleftarrow{{L}} {\Zbar_B} \xleftarrow{\operatorname{L}} {\Xbar_B} \\
        {\Xbar_C} \xleftarrow{{L}} {\Ybar_C} \xleftarrow{{L}} {\Zbar_C} \xleftarrow{{L}} {\Xbar_C} \\
        {\Xbar_D} \xleftarrow{{L}} {\Ybar_D} \xleftarrow{{L}} {\Zbar_D} \xleftarrow{{L}} {\Xbar_D}
    \end{aligned}
\end{equation}

Thus, applying ${L}$ to the boundary is equivalent to applying $\pS\pH \otimes \pS\pH \otimes \pS\pH  \otimes \pS\pH$ in the bulk. 

The gates ${L}$ and ${R}$ seem to be the only tensor-product gates that keep the code subspace invariant and, at the same time, change the logical gates. 
We propose to use these gates as an "effective" Hadamard gates in the creation of the logical zero state, 
since $L$ and $R$ only require local gates: applying $R$ to the logical zero state $\ket{G^{}_{0000}}$ yields to

\begin{equation}
    \ket{G^{}_{++++}}=\operatorname{R}\ket{G^{}_{0000}}\,.
\end{equation}
We will use these gates to obtain $\ket{G^{}_{+000}}$, $\ket{G^{}_{+_y 000}}$ and ${1/\sqrt{2}}\left(\ket{G^{}_{+0++}} + \ket{G^{}_{-1++}}\right)$ involving less two-qubit gates.

\bigskip
\noindent {\bf Logical Hadamard and CZ gates.}
A Hadamard gate can be expressed as $\pH=1/\sqrt{2}(\pX+\pY)$.
Thus, the logical Hadamard $\overline{\pH}$ gate can also be written in a similar way using the logical $\Xbar$ and $\Zbar$ gates.
We explicitly write the logical Hadamard acting on qubit A, but the procedure is equivalent for the rest of logical qubits:
\begin{align}\label{eq:log_H}
    \Hbar_A&=\frac{1}{\sqrt{2}}({\Xbar^{\oper{opt}}_A}+{\Zbar^{\oper{opt}}_A}) \nn \\
    &= \frac{1}{\sqrt{2}} (\pZ_1\pX_2\pX_3 - \pY_1 \pZ_2 \pY_3) \nn \\ 
    &=\pY_1\pH_3\CX_{23}\CX_{21}\pH_2\CX_{21}\CX_{23}\pY_3\pH_3
\end{align}
We can then obtain $\ket{G^{}_{+000}}=\Hbar_{\operatorname{A}}\ket{G^{}_{0000}}$.
Combining the rotational gates $R$ and $L$ together with $\Hbar_{\operatorname{A}}$, we also derive 
\begin{equation}\label{eq:logical_y} \ket{G^{}_{+_y000}}={L}\Hbar_{\operatorname{A}}{R} \ket{G^{}_{0000}}\,.
\end{equation}
This can be seen by the following scheme,
\begin{align}\label{eq:y_log}
    &\ket{G^{}_{0000}} \, \xrightarrow{R} \,\ket{G^{}_{++++}}\, \nn \\&\xrightarrow{\Hbar_{\operatorname{A}}} \,\ket{G^{}_{0+++}} \, \xrightarrow{L} \,  \ket{G^{}_{+_y000}} \,.
\end{align}

A $\CZ$ gate acting on two qubits labeled $i$ and $j$ can be written as $\CZ_{ij}=(1/2)(\one +\pZ_i+ \pZ_j- \pZ_i\pZ_j)$. Therefore, the logical $\CZbar$ can also be expressed as:
\begin{align}
    \CZbar_{\text{AB}} &=\frac{1}{2}(\one + \pZ_1\pX_2\pX_3+\pZ_4\pX_5\pX_6-\pZ_1\pX_2\pX_3\pZ_4\pX_5\pX_6)  \nn\\
    &=\pH_1\pH_4\frac{1}{2}(\one + \pX_1\pX_2\pX_3+\pX_4\pX_5\pX_6 \nn \\ 
    & \quad\, -\pX_1\pX_2\pX_3\pX_4\pX_5\pX_6)  \pH_1\pH_4 \nn\\
    &=(\pH_1\pH_4 \CX_{12}\CX_{13}\CX_{45}\CX_{46}\pH_1\pH_4 )\CZ_{14} \nn\\ &\quad\,\cdot(\pH_1\pH_4 \CX_{12}\CX_{13}\CX_{45}\CX_{46}  \pH_1\pH_4)\,.
\end{align}
This gate allows us to obtain an entangled logical state by applying gates on the boundary:
\begin{equation}\label{eq:ent_gate}
    \CZbar_{AB}\operatorname{R} \ket{G^{}_{0000}}=\frac{1}{\sqrt{2}}(\ket{G^{}_{+0++}}+\ket{G^{}_{-1++}})\,.
\end{equation}

\section{Logical operator and stabilizer values of single-qubit and entangling gates in the pentagon code}\label{SS:Eigenvalues_gates_HAPPY}
We show the measured logical operators and stabilizer generators for logical qubit $A$ measured in the Pauli bases $\{X, Y, Z\}$ and for a $\overline{\text{CZ}}$ gate applied on  logical qubit A and B on the state $\ket{\text{++++}}$ in Fig.~\ref{fig:SS_eigenvalues_happy_init}. Measurements are illustrated in blue bars while numerical simulation is given in orange. The measured expectation values of the measured states agree with the numerical simulation indicating the successful implementation of the corresponding gates. Here we show the manipulation of logical qubit $A$, but all gates can be applied on qubits $B-D$. 
\begin{figure}[!ht]
    \centering
    \includegraphics[width=\linewidth]{Figures/HaPPY_init.pdf}
    \caption{\textbf{State preparation.} \textbf{(a-c)} Expectation values of the stabilizer generators and logical operators initializing the logical qubit $A$ of $\ket{G^{}_{0000}}$ in the Pauli bases $\{X, Y, -Z\}$. \textbf{(d)} A logical $\text{CZ}$ gate applied on logical qubit A and B on the state $\ket{\text{++++}}$. The blue bar represents the experimental measurement results, whereas the orange bar corresponds to the simulation data. The error bars are computed from quantum projection noise and represent one standard deviation of statistical uncertainty.}
    \label{fig:SS_eigenvalues_happy_init}
\end{figure}

\section{Syndrome table for the pentagon code}\label{SS:Lookup_table_HaPPY}
We investigate if an induced error produces the syndrome predicted by theory in order to demonstrate the ability of the pentagon code to detect and successively correct for single qubit errors. The syndromes derived from theory, along with their corresponding errors, are presented in Tab.~\ref{tab:SS:Syndrome_HaPPY}. To verify the table experimentally, single qubit errors corresponding to the gray highlighted syndromes are introduced in the experiment and compared to the theory. The experimental results are given in Fig.~\ref{fig:Syndrome}. The blue bars correspond to the experiment while the orange bars are numerical simulations, both agreeing with each other and the theory. 
\begin{figure}[th!]
    \centering
    \includegraphics[width=\linewidth]{Figures/HaPPY_syndrome_2.pdf}
    \caption{\textbf{Syndrome extraction.} \textbf{(a-f)} Extracted syndrome for the induced error given in the top left corner of each subfigure. The light blue bars represent the measured values, while the orange bars correspond to the simulation results. The error bars are computed from quantum projection noise and represent one standard deviation of statistical uncertainty. The measured syndromes are in agreement with the theoretical syndromes highlighted in gray in Tab.~\ref{tab:SS:Syndrome_HaPPY}.} 
    \label{fig:Syndrome}
\end{figure}

\begin{table}[ht]
\caption{The corresponding syndrome for a single $X$-, $Y$-, or $Z$-error on a physical qubit is presented. The simulated and measured pairs in Fig.~\ref{fig:Syndrome} are highlighted in gray.}
\begin{tabular}{c|c|c}
\label{tab:SS:Syndrome_HaPPY}
\textbf{Qubit}\phantom{-} & \phantom{-}\textbf{Error}\phantom{-} & \textbf{Syndrome $g_1-g_8$} \\
\hline
$0$ & X & \phantom{-}1, \phantom{-}1, -1, \phantom{-}1, \phantom{-}1, \phantom{-}1, -1, -1 \\
$0$ & Y & \phantom{-}1, \phantom{-}1, -1, -1, \phantom{-}1, \phantom{-}1, \phantom{-}1, -1 \\
$0$ & Z & \phantom{-}1, \phantom{-}1, \phantom{-}1, -1, \phantom{-}1, \phantom{-}1, -1, \phantom{-}1 \\
$1$ & X & \phantom{-}1, \phantom{-}1, -1, -1, \phantom{-}1, \phantom{-}1, \phantom{-}1, \phantom{-}1 \\
$1$ & Y & \phantom{-}1, \phantom{-}1, -1, -1, \phantom{-}1, \phantom{-}1, -1, -1 \\
$1$ & Z & \phantom{-}1, \phantom{-}1, \phantom{-}1, \phantom{-}1, \phantom{-}1, \phantom{-}1, -1, -1 \\
\cellcolor{gray!30}$2$ &\cellcolor{gray!30} X &\cellcolor{gray!30} \phantom{-}1, \phantom{-}1, -1, \phantom{-}1, \phantom{-}1, \phantom{-}1, \phantom{-}1, -1 \\
$2$ & Y & \phantom{-}1, \phantom{-}1, -1, -1, \phantom{-}1, \phantom{-}1, -1, \phantom{-}1 \\
$2$ & Z & \phantom{-}1, \phantom{-}1, \phantom{-}1, -1, \phantom{-}1, \phantom{-}1, -1, -1 \\
$3$ & X & \phantom{-}1, -1, \phantom{-}1, \phantom{-}1, \phantom{-}1, -1, -1, \phantom{-}1 \\
$3$ & Y & \phantom{-}1, -1, -1, \phantom{-}1, \phantom{-}1, \phantom{-}1, -1, \phantom{-}1 \\
$3$ & Z & \phantom{-}1, \phantom{-}1, -1, \phantom{-}1, \phantom{-}1, -1, \phantom{-}1, \phantom{-}1 \\
$4$ & X & \phantom{-}1, -1, -1, \phantom{-}1, \phantom{-}1, \phantom{-}1, \phantom{-}1, \phantom{-}1 \\
$4$ & Y & \phantom{-}1, -1, -1, \phantom{-}1, \phantom{-}1, -1, -1, \phantom{-}1 \\
$4$ \cellcolor{gray!30}& Z \cellcolor{gray!30}&\cellcolor{gray!30} \phantom{-}1, \phantom{-}1, \phantom{-}1, \phantom{-}1, \phantom{-}1, -1, -1, \phantom{-}1 \\
$5$ & X & \phantom{-}1, -1, \phantom{-}1, \phantom{-}1, \phantom{-}1, \phantom{-}1, -1, \phantom{-}1 \\
$5$ & Y & \phantom{-}1, -1, -1, \phantom{-}1, \phantom{-}1, -1, \phantom{-}1, \phantom{-}1 \\
$5$ & Z & \phantom{-}1, \phantom{-}1, -1, \phantom{-}1, \phantom{-}1, -1, -1, \phantom{-}1 \\
$6$ \cellcolor{gray!30}& X \cellcolor{gray!30}&\cellcolor{gray!30} -1, \phantom{-}1, \phantom{-}1, \phantom{-}1, -1, -1, \phantom{-}1, \phantom{-}1 \\
$6$ & Y & -1, -1, \phantom{-}1, \phantom{-}1, \phantom{-}1, -1, \phantom{-}1, \phantom{-}1 \\
$6$ & Z & \phantom{-}1, -1, \phantom{-}1, \phantom{-}1, -1, \phantom{-}1, \phantom{-}1, \phantom{-}1 \\
$7$ & X & -1, -1, \phantom{-}1, \phantom{-}1, \phantom{-}1, \phantom{-}1, \phantom{-}1, \phantom{-}1 \\
\cellcolor{gray!30}$7$&\cellcolor{gray!30} Y &\cellcolor{gray!30} -1, -1, \phantom{-}1, \phantom{-}1, -1, -1, \phantom{-}1, \phantom{-}1 \\
$7$ & Z & \phantom{-}1, \phantom{-}1, \phantom{-}1, \phantom{-}1, -1, -1, \phantom{-}1, \phantom{-}1 \\
$8$ & X & -1, \phantom{-}1, \phantom{-}1, \phantom{-}1, \phantom{-}1, -1, \phantom{-}1, \phantom{-}1 \\
$8$ & Y & -1, -1, \phantom{-}1, \phantom{-}1, -1, \phantom{-}1, \phantom{-}1, \phantom{-}1 \\
$8$ & Z & \phantom{-}1, -1, \phantom{-}1, \phantom{-}1, -1, -1, \phantom{-}1, \phantom{-}1 \\
$9$ & X & \phantom{-}1, \phantom{-}1, \phantom{-}1, -1, -1, \phantom{-}1, \phantom{-}1, -1 \\
$9$ & Y & -1, \phantom{-}1, \phantom{-}1, -1, -1, \phantom{-}1, \phantom{-}1, \phantom{-}1 \\
$9$ & Z & -1, \phantom{-}1, \phantom{-}1, \phantom{-}1, \phantom{-}1, \phantom{-}1, \phantom{-}1, -1 \\
$10$ & X & -1, \phantom{-}1, \phantom{-}1, -1, \phantom{-}1, \phantom{-}1, \phantom{-}1, \phantom{-}1 \\
$10$ \cellcolor{gray!30}& \cellcolor{gray!30}Y &\cellcolor{gray!30} -1, \phantom{-}1, \phantom{-}1, -1, -1, \phantom{-}1, \phantom{-}1, -1 \\
$10$ & Z & \phantom{-}1, \phantom{-}1, \phantom{-}1, \phantom{-}1, -1, \phantom{-}1, \phantom{-}1, -1 \\
$11$ & X & \phantom{-}1, \phantom{-}1, \phantom{-}1, -1, -1, \phantom{-}1, \phantom{-}1, \phantom{-}1 \\
$11$ & Y & -1, \phantom{-}1, \phantom{-}1, -1, \phantom{-}1, \phantom{-}1, \phantom{-}1, -1 \\
\cellcolor{gray!30}$11$ &\cellcolor{gray!30} Z &\cellcolor{gray!30} -1, \phantom{-}1, \phantom{-}1, \phantom{-}1, -1, \phantom{-}1, \phantom{-}1, -1 \\
\end{tabular}
\end{table}

\section{General encoding for graph codes without extra qubits}\label{sect:enc_extra_q}

The encoding and partial decoding method proposed in Ref.~\cite{Angl_s_Munn__2024} requires $k+n$ qubits. 
Although this method was convenient due to its graphical interpretation, it requires $k$ additional qubits that are initially situated in the bulk.
Here we show a different way to encode through a graph code that obviates this need. 
In this scenario, the $k$ bulk qubits are part of the $n$ boundary qubits. After applying an encoding unitary operator, the resulting state occupies the entire boundary. 
This method will help us to perform a more efficient partial decoding for our specific case.

To start, define first a graph code $G$ that encodes a $k$-qubit Hilbert space $\HH_{\text{in}}$ to the subspace $\CC$ from a $n$-qubit Hilbert space $\HH_{\text{out}}$.
Let $\{g_i\}^{n-k}_{i=1}$ be the generators of this code subspace $\CC$ written in its \emph{standard form},
\begin{equation}\label{eq:codesub}
        \CC_{G}=
        \begin{pmatrix}[cc|cc|c]
                \one_{n-k} & \nothing{A} &\nothing{B} & \nothing{C} & \omega^{\intercal}
            \end{pmatrix}\,.
    \end{equation}
The above equation is expressed in the check matrix representation of the stabilizer formalism \cite{nielsen_chuang_2010} where the rows correspond to the $n-k$ generators.
Here $A$ and $C$ are $(n-k) \times k$ matrices, $B$ is a $(n-k) \times (n-k)$ square matrix whose diagonal is zero, 
$\one_{n-k}$ is the identity matrix of size $n-k$ and $\omega$ is a real vector of size $n-k$.
      
In this formalism, the X-part and Z-part of the generators are represented by the binary matrices $(\one_{n-k} \,\, A)$ and $(B \,\, C)$ respectively.
The vector $\omega$ contains the phases of the generators: 
$\omega _i=0$ if the sign of the generator $g_i$ is positive
and $\omega_i=1$ if it is negative.
We note that elementary row operations modulo 2 in this representation do not change the stabilizer space.
In contrast, performing elementary column operations modulo 2 is equivalent to applying the following gates to the generators~\cite{Gheorghiu_2014}:
\begin{enumerate}[(i)]
    \item $\CNOT_{ab}$ gates corresponds to sum column $a$ to column $b$ for the $X$-part and sum column $b$ to column $a$ for the $Z$-part,
    \item $\text{SWAP}_{ab}$ gates corresponds to interchanging $a$ and $b$ columns.
\end{enumerate}
Elementary row operators induce {\em non-trivial phases} to the vector phase $\omega$ as a consequence of the Pauli relations. Such phases only depend on the X- and Z-parts of the check matrix: For example, summing the rows $(11|00|0)$ and
$(00|11|0)$ yields $(11|11|1)$, corresponding to
$(X\ot X)(Z\ot Z)=-(Y\ot Y)$. On the contrary, we refer to changes in the vector phase $\omega$ that arise from adding its elements as {\em trivial phases}.
In a similar way, applying $\CNOT_{ab}$ also changes the vector phase non-trivially~\cite{PhysRevA.70.052328}.

\bigskip
\noindent {\bf Encoding by projection.} 
We use the following property shown in~Ref.~\cite[Section 3.1]{10.1007/978-3-642-20901-7_9}
:
Recall that in standard form, the $i$th generator contains a Pauli $X$ on the coordinate $i$. Thus, we can write
$g_i = X_i \ot \tilde g_i$.
Its action on a state of the form $
\ket{\psi} = \ket{0}_i \ot \ket{\tilde \psi}$ is then given by a Hadamard and controlled-gate operation
\begin{equation}\label{eq:single_project}
\tfrac{1}{2}(\one + g_i) \ket{\psi} = \frac{\text{controlled-}\tilde g_i}{\sqrt{2}}
H_i (\ket{0}_i \ot \ket{\tilde\psi})\,.
\end{equation}
Now note that $\tilde g_i$ contains only Pauli $Z$ gates on the first $n-k$ qubits. 
Thus, they act trivially on $\ket{0}^{\otimes n-k}$, whereas the Hadamard $H$ only affects coordinate $i$.

As a consequence,
projecting a $\ket{\phi_{\text{in}}}\in \HH_{\text{in}}$
onto the code subspace $P_\CC$ yields,
\begin{equation}\label{eq:PUre}
        P_\CC \big(\ket{0}^{\otimes n-k} \otimes \ket{\phi_{\text{in}}}\big) =  \frac{U_\CC}{2^{(n-k)/2}}\big(\ket{0}^{\otimes n-k} \otimes \ket{\phi_{\text{in}}}\big)  \,,
    \end{equation}
where $U_\CC$ is a sequence of Hadamard and controlled Pauli gates given by Eq.~\eqref{eq:single_project}:
\begin{align}
U_\CC = \prod^{n-k}_{i=1} ({\text{controlled-}\tilde g_i}) H_i.
\end{align}
By Eq.~\eqref{eq:PUre}, 
\begin{equation}\label{eq:newlogbasis}
  \ket{U_{i_1 \dots  i_k}} =
  U_\CC \big(\ket{0}^{\otimes n-k}\otimes \ket{i_1\dots i_k}\big)
    \end{equation}
spans an orthonormal basis of $\CC$.
In Proposition~\ref{lem:gates2} we explain how to obtain a
logical {\em graph} state basis $\ket{G_{i_1 \dots i_k}}$, which 
respect the action of logical operators $\Zbar$ 
extracted from the graph.
In general, the two bases are different
$\ket{U_{i_1 \dots i_k}} \neq \ket{G_{i_1 \dots i_k}}$.

\bigskip
\noindent {\bf Encoding with given $\Zbar$.} 
In practice, it is useful to have a basis that respects the choice of logical $\Zbar$ operators extracted from the graph code~\cite[Eq.~(14)]{Angl_s_Munn__2024}:
\begin{equation}\label{eq:Zlogical}
                \Zbar_{}=
                \begin{pmatrix}[cc|cc]
                        \mathbb{0} & \nothing{D} &\nothing{F} & \nothing{G} \,|\,\gamma^\intercal
                    \end{pmatrix} \quad \text{where} \quad  \oper{rank}(\nothing{D})=k.
\end{equation}
In the above equation, we expressed the operators in the check matrix representation, where the $k$ rows correspond to the logical operators
$\Zbar$. The matrix $\mathbb{0}$ is a $k \times (n-k)$ matrix of all zero entries, $F$ is a $k \times (n-k)$ matrix, $D$ and $G$ are of size $k\times k$ and $\gamma$ is a vector phase of size $k$. 
Since $\Zbar$ is derived from a graph code, performing row operations on $(\nothing{D}|\nothing{G}|\gamma^\intercal)$ results in $(\one_k | \Gamma_{\Zbar}|0^\intercal)$, where $\Gamma_{\Zbar}$ has the structure of an adjacency matrix with zeros on the diagonal, defining a set of edges $E_{\Zbar}$. 

Let 
\begin{align}\label{eq:encwoanc1}
V = 
 \left( \prod_{(u,v)\in E_{\Zbar}} \CZ_{uv} \right)  U_{\nothing{D}} H^{\ot k}\,,
\end{align}
which we use to preprocess the input,
$    \ket{\widetilde{\phi_{\text{in}}}} = 
    V
    \ket{\phi_{\text{in}}}\,.$

Here, $U_D$ is a unitary matrix that transforms the $k$-qubit state with generators $(\one_k| 0)$ to another with generators $(D |0 )$.
Recall that matrix $D$ in Eq.~\eqref{eq:newlogbasis} has rank $k$. Thus, it can be obtained by elementary column operations from the identity matrix $\one$.  The unitary matrix $U_D$ is then decomposed as a sequence of gates \text{SWAP} and \text{CNOT}. 

\begin{proposition}\label{lem:gates2}
    Let $\CC_{G}$ be the code subspace written in the standard form 
    and let
    $\Zbar$ be a set of logical-$Z$ operators
    given by Eq.~\eqref{eq:Zlogical}. 
    Then,
    \begin{equation}\label{eq:encwoanc0}
        U_\CC \big(
        \ket{0}^{\otimes n-k} \otimes V \ket{\phi_{\text{in}}}
        \big)
        =
        \ket{\phi_{\text{enc}}}
    \end{equation}
        encodes $\ket{\phi_{\text{in}}}$ such that 
        the $\Zbar$ act as logical-$Z$ operators.
    \end{proposition}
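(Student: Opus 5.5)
The plan is to work on the code space, where the stabilizer projector $P_\CC$ commutes with every logical operator. By Eq.~\eqref{eq:PUre}, for any $k$-qubit input $\ket{\chi}$ we have $U_\CC(\ket{0}^{\otimes n-k}\otimes\ket{\chi}) = 2^{(n-k)/2}P_\CC(\ket{0}^{\otimes n-k}\otimes\ket{\chi})$. This shows two things. First, $\ket{\phi_{\text{enc}}}$ lies in $\CC$. Second, since each row $\Zbar_j$ of Eq.~\eqref{eq:Zlogical} commutes with all $g_i$, we get
\begin{equation*}
\Zbar_j\,U_\CC\big(\ket{0}^{\otimes n-k}\otimes\ket{\chi}\big)=2^{(n-k)/2}P_\CC\,\Zbar_j\big(\ket{0}^{\otimes n-k}\otimes\ket{\chi}\big).
\end{equation*}
So the action of $\Zbar_j$ on the output can be pulled back to an action on the input register.

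Next I use the block structure of Eq.~\eqref{eq:Zlogical}. The $X$-part of $\Zbar_j$ vanishes on the first $n-k$ qubits, and its $Z$-part there (row $j$ of $F$) acts trivially on $\ket{0}^{\otimes n-k}$. Hence
\begin{equation*}
\Zbar_j(\ket{0}^{\otimes n-k}\otimes\ket{\chi})=\ket{0}^{\otimes n-k}\otimes Q_j\ket{\chi},
\end{equation*}
where $Q_j$ is the $k$-qubit Pauli with check row $(D_j\,|\,G_j\,|\,\gamma_j)$. The statement therefore reduces to a purely $k$-qubit claim: the vectors $V\ket{i_1\dots i_k}$ are joint eigenvectors of $Q_1,\dots,Q_k$, with eigenvalues given by $(-1)^{i_j}$ up to a fixed relabelling of the logical basis. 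Because $P_\CC$ restricted to inputs of the form $\ket{0}^{\otimes n-k}\otimes\cdot$ is proportional to an isometry, the eigenvalue relations then transfer to $\ket{\phi_{\text{enc}}}$.

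For the $k$-qubit claim I compute $V Z_j V^\dagger$ factor by factor, tracking check matrices.
\begin{enumerate}[(i)]
\item $H^{\otimes k}$ sends the rows $(0\,|\,\one_k)$ to $(\one_k\,|\,0)$.
\item $U_D$, built from CNOTs and SWAPs via column operations, sends $(\one_k\,|\,0)$ to $(D\,|\,0)$, using rules (i)–(ii) on the $X$-part.
\item The CZ layer on $E_{\Zbar}$ adds $Z$-parts determined by the $X$-parts and the adjacency matrix $\Gamma_{\Zbar}$.
\end{enumerate}
After this, I compare $\langle VZ_jV^\dagger\rangle$ with $\langle Q_j\rangle$ as stabilizer groups. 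I do this by applying the same row reduction that takes $(D\,|\,G\,|\,\gamma)$ to $(\one_k\,|\,\Gamma_{\Zbar}\,|\,0)$; the order of gates in $V$ is chosen exactly so that this canonical form is reproduced. Equality of the two stabilizer groups, including signs, then shows that $V\ket{0\cdots0}$ is the common $+1$ eigenstate of all $Q_j$. Moreover, $V X_l V^\dagger$ flips exactly the appropriate eigenvalues, so each $V\ket{i}$ is a joint eigenstate, and applying the inverse row transformation gives the eigenvalue of each individual $Q_j$.

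I expect the main obstacle to be the phase bookkeeping. The row reduction and the CNOT conjugations both produce nontrivial phases, as discussed before Eq.~\eqref{eq:single_project}. I would handle this by checking at the group level that $-\one\notin\langle Q_j\rangle$ and that the reduced form carries phase $0^\intercal$, which the paper asserts for $\Zbar$ coming from a graph code. Any remaining discrepancy would then be a fixed Pauli on the input, which amounts only to a relabelling of the logical basis and not to a failure of the $\Zbar$ to act as logical-$Z$ operators.
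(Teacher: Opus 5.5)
Your proposal is correct and follows the paper's proof. You pull $\Zbar_\ell$ through $P_\CC$ to an effective $k$-qubit Pauli $(D_\ell|G_\ell|\gamma_\ell)$ on the input (its $X$-part vanishes and its $F$-part acts trivially on $\ket{0}^{\otimes n-k}$). You then track check matrices through $H^{\otimes k}$, $U_D$ and the $\CZ$ layer on $E_{\Zbar}$, comparing with the row reduction $(D|G|\gamma)\to(\one_k|\Gamma_{\Zbar}|0)$. The only difference is that you work in the Heisenberg picture for $\ket{0\cdots0}$ and then flip eigenvalues with $VX_lV^\dagger$, whereas the paper tracks a general phase vector $\omega$ via its trivial/nontrivial phase split.

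Drop the ``up to a relabelling'' hedge, though. A sign mismatch $VZ_jV^\dagger=-Q_j$ would make $\Zbar_j$ act as $-Z_j$, which contradicts the claim rather than relabelling the basis. Your own check already rules this out. Since CNOTs and SWAPs fix $\ket{+}^{\otimes k}$, $V\ket{0\cdots0}=\prod_{E_{\Zbar}}\CZ\,\ket{+}^{\otimes k}$ has stabilizer $(\one_k|\Gamma_{\Zbar}|0)$. Also, $VZ_jV^\dagger$ and $Q_j$ share the check row $(D_j|D_j\Gamma_{\Zbar})=(D_j|G_j)$. Together these force $VZ_jV^\dagger=Q_j$ exactly.
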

\begin{proof}
    For the relation Eq.~\eqref{eq:encwoanc0} to hold, 
    we need to show that $V$ of Eq.~\eqref{eq:encwoanc1} is 
    the correct k-qubit unitary acting on the computational basis, such that $\ket{\mu_{i_1 \dots i_k}} = V\ket{i_1 \dots i_k}$ is mapped to 
    $\ket{G_{i_1 \dots i_k}}$ by the unitary $U_\CC$.
    
    Suppose the $\ket{\mu_{i_1\dots i_k}}$ form an orthonormal basis so that
        \begin{equation}\label{eq:newdem-3}
                U_\CC \big(\ket{0}^{\otimes n-k} \otimes \ket{\mu_{i_1\dots i_k}}\big)=\ket{G_{i_1\dots i_k}}\,.
            \end{equation}
        Our goal is to find a finite set of unitary gates mapping each $\ket{i_1\dots i_k}$ to $\ket{\mu_{i_1\dots i_k}}$.     
        To achieve that, we first derive the stabilizers of  $\ket{\mu_{i_1\dots i_k}}$. 
        Denote by $\{ \Zbar_1,\dots,\Zbar_k\}$
        the logical operators from Eq.~\eqref{eq:Zlogical}.
        Using Eq.~\eqref{eq:PUre} and Eq.~\eqref{eq:newdem-3}, we obtain
        \begin{align}\label{eq:newdem-2}
        \ket{G_{i_1\dots i_k}} &= (-1)^{i_\ell}\Zbar_\ell \ket{G_{i_1\dots i_k}} \nn \\
        &
        = (-1)^{i_\ell}\Zbar_\ell U_\CC 
        (\ket{0}^{\otimes n-k} \otimes \ket{\mu_{i_1\dots i_k}}) \\
         &= 
         (-1)^{i_\ell} \Zbar_\ell  2^{(n-k)/2}{P_\CC}                  
         (\ket{0}^{\otimes n-k} \otimes \ket{\mu_{i_1\dots i_k}}) \nn\\
         &= 
         (-1)^{i_\ell} 2^{(n-k)/2}{P_\CC} \Zbar_\ell (\ket{0}^{\otimes n-k} \otimes \ket{\mu_{i_1\dots i_k}}) \,,\nn
        \end{align}
        for all $\ell$ and $i_1, \dots, i_k$. 
        Here we used that $P_\CC$ commutes with all $\Zbar_\ell$ and 
        $(-1)^{i_\ell}\Zbar_\ell \ket{G_{i_1\dots i_k}}=\ket{G_{i_1\dots i_k}}$.
        
        The first $n-k$ qubits are initialized in $\ket{0}$, 
        and by Eq.~\eqref{eq:Zlogical}, $\Zbar_\ell$ acts trivially on the first $n-k$ qubits, and we can define an effective action 
        $\Zbar^{\text{eff}}_\ell$ on the last $k$ qubits. 
        We now continue Eq.~\eqref{eq:newdem-2} by using $\Zbar^{\text{eff}}_\ell$ and again Eq.~\eqref{eq:PUre} which leads to 
        \begin{align}\label{eq:extrademo}
        \ket{G_{i_1\dots i_k}}
         &=(-1)^{i_\ell} 2^{(n-k)/2} P_\CC (\ket{0}^{\otimes n-k} \otimes \Zbar^{\text{eff}}_\ell \ket{\mu_{i_1\dots i_k}}) \nn \\
        &= (-1)^{i_\ell} U_\CC (\ket{0}^{\otimes n-k} \otimes \Zbar^{\text{eff}}_\ell \ket{\mu_{i_1\dots i_k}}) \nn \\
        &= (-1)^{i_\ell} U_\CC \Zbar_\ell  (\ket{0}^{\otimes n-k} \otimes \ket{\mu_{i_1\dots i_k}}) \,.
        \end{align}
         
        Comparing the Eq.~\eqref{eq:extrademo} with Eq.~\eqref{eq:newdem-3},
        one concludes that $\{(-1)^{i_\ell}\Zbar_\ell\}$ stabilizes $\ket{0}^{\otimes n-k} \otimes \ket{\mu_{i_1\dots i_k}}$.
        In particular, by Eq.~\eqref{eq:Zlogical} 
        and noting that the Pauli elements from $F$ acts trivially on $\ket{0}^{\ot n-k}$, 
        the stabilizers of  
        $\ket{\mu_{i_1\dots i_k}}$ are
        \begin{equation}\label{eq:newdem-1}
                \begin{pmatrix}[c|c|c] \nothing{D} & \nothing{G} & \omega^{\intercal}+\gamma^\intercal \end{pmatrix}  \quad\text{with}\quad \omega=(i_1,\dots,i_k)\,,
            \end{equation} 
        where $\omega$ is a phase vector 
        which determines the basis element,
        and $\gamma$ is the constant vector phase from the $\Zbar$ operators.
        
        We continue the proof by deriving the gates to map the stabilizers from $\ket{i_1\dots i_k}$ to the ones from $\ket{\mu_{i_1\dots i_k}}$.
        We first start by writing the check matrix of the state $H^{\otimes k}\ket{i_1 \dots i_k}$ as
    \begin{equation}\label{eq:newdem0}
           \begin{pmatrix}[c|c|c] \one_k & 0 & \omega^{\intercal} \end{pmatrix}\,.
        \end{equation}
    
       By elementary
       {\em column} operations to the above equation, we obtain
        \begin{equation}\label{eq:newdem0_}
                \begin{pmatrix}[c|c|c] \nothing{D}  & 0 & \omega^{\intercal}\end{pmatrix} \,.
        \end{equation}
        This corresponds to apply an unitary operator $U_{\nothing{D}}$ which can be decomposed by a set of \text{SWAP} and $\CNOT$ gates. Thus, Eq.~\eqref{eq:newdem0_} are the generators from $U_D H^{\otimes k}\ket{i_1 \dots i_k}$.
        Note since the Z-part from Eq.~\eqref{eq:newdem0} is zero, the phase vector does not change (see Ref.~\cite{PhysRevA.70.052328}).
        
        We proceed by performing {\em row} operations to Eq.~\eqref{eq:newdem0_} 
        so that $D$ is mapped to the identity,
        \begin{equation}\label{eq:newdem2}
                \begin{pmatrix}[c|c|c] \one_k & 0 & \widetilde{\omega}^{\intercal} \end{pmatrix}\,,
        \end{equation} 
        Here $\widetilde\omega^{\intercal}=D^{-1}\omega^{\intercal}$, since the Z part of the check matrix is zero and, thus, the change from $\omega$ to $\widetilde\omega$ only contains trivial phases.
        Recall that such operators do not change the stabilizer space and, therefore, Eq.~\eqref{eq:newdem2} are still generators of $U_D H^{\otimes k}\ket{i_1 \dots i_k}$.
        
        Finally, we apply a set of \text{CZ} gates:
        Let $E_{\Zbar}$ be the edges determined by $\Gamma_{\Zbar}$ [see below Eq.~\eqref{eq:Zlogical}] and apply $\prod_{(u,v)\in E_{\Zbar}} \CZ_{uv}$ to Eq.~\eqref{eq:newdem2} leading to
        \begin{equation}\label{eq:newdem4}
                \begin{pmatrix}[c|c|c] \one_k & \Gamma_{\Zbar} & \widetilde{\omega}^{\intercal} \end{pmatrix}\,.
        \end{equation}
        Note that $\tilde \omega$ does not change, due to
        \begin{align}
            \operatorname{CZ} (X\ot I) \operatorname{CZ} &= X\ot Z \,,\nn\\
            \operatorname{CZ} (I\ot X) \operatorname{CZ}& = Z \ot X\,,
        \end{align}
        and all $\operatorname{CZ}$ operator commute with $Z$'s.
        
        Recall that performing row operations on $(\nothing{D}|\nothing{G}|\gamma^\intercal)$ results in $(\one_k | \Gamma_{\Zbar}|0^\intercal)$ [also see below Eq.~\eqref{eq:Zlogical}]. Therefore, doing the row operations in reverse on Eq.~\eqref{eq:newdem4} leads to
        \begin{align}\label{eq:newdem3}
       (\nothing{D}\,|\,\nothing{G}\,|\,\omega^{\intercal}+\gamma^\intercal) \,,
        \end{align}
    which are the stabilizers of $\ket{\mu_{i_1\dots i_k}}$, as shown  in Eq.~\eqref{eq:newdem-1}.
    Here $\gamma$ are the non-trivial phases appearing because of the row operations, and the change $\omega^{\intercal}=D\widetilde\omega^{\intercal}$ only contains trivial phases.
        Following the unitary gates applied to transform Eq.~\eqref{eq:newdem0} to~\eqref{eq:newdem3}, we conclude that
        \begin{align}\label{eq:newdem5}
                \ket{\mu_{i_1\dots i_k}} &=  \left( \prod_{(u,v)\in E_{\Zbar}} \CZ_{uv} \right)  U_{\nothing{D}}H^{\otimes k}\ket{i_1 \dots i_k} \,.
        \end{align}
           By Eq.~\eqref{eq:newdem-3} and Eq.~\eqref{eq:newdem5}, we obtain Eq.~\eqref{eq:encwoanc0} with $V$ defined in Eq.~\eqref{eq:encwoanc1}.
        This ends the proof.
    \end{proof}
Note that Proposition~\ref{lem:gates2} applies for a given code subspace $\CC_G$ and logical operators $\Zbar$, and both do not depend on the interactions between the qubits from $\HH_{\text{in}}$.
If these qubits interact non-trivially in the graph code, we need to take into account new phases appearing in Eq.~\eqref{eq:newdem5}.
Denote by $E_{\text{in}}$ the edges between the qubits of $\HH_{\text{in}}$.
Similarly to the encoding method from Ref.~\cite{Angl_s_Munn__2024}, Eq.~\eqref{eq:encwoanc1} is then modified as follows
     \begin{align}\label{eq:encwoanc_graph}
                   V=&  \left( \prod_{(u,v)\in E_{\Zbar}} \CZ_{uv} \right)  U_{\nothing{D}}H^{\otimes k} \nn \\  &\cdot \left( \prod_{(u,v)\in E_{\text{in}}} \CZ_{uv} \right)  \,.
    \end{align}

\noindent {\bf Application of Proposition~\ref{lem:gates2}:} The above proposition can be used to extract the encoding unitary from any graph code.
Here, we are using it to extract from the AME state of $6$ qubits illustrated in Fig.~\ref{fig:AME}(a), the unitary operator that maps $3$ qubits from the bottom to the $3$ qubits from the top [see Fig.~\ref{fig:AME}(b)]. 
In this particular case, the code subspace is trivial $\CC_G=\emptyset$ since $n=k=3$. 
Thus, to apply such a proposition, we only need to obtain the logical operators $\Zbar$ from the graph code.
We first write the check matrix for the graph state illustrated in Fig.~\ref{fig:AME}(a) as follows,
\begin{equation}\label{eq:checkAME}
    \begin{pmatrix}[@{\hskip 0.2cm}ccc c ccc @{\hskip 0.2cm}| @{\hskip 0.2cm} ccc c ccc@{\hskip 0.2cm} ]
        1 & 0 & 0  
        && 0 & 0 & 0 &
        0 & 1 & 1 
        && 1 & 1 & 0 \\
        %
        0 & 1 & 0 
        && 0 & 0 & 0 &
        1 & 0 & 1 
        && 0 & 1 & 0 \\
        %
        0 & 0 & 1 
        && 0 & 0 & 0 &
        1 & 1 & 0
        && 0 & 1 & 1 \\
        \\
        0 & 0 & 0 
        && 1 & 0 & 0 &
        1 & 0 & 0 
        && 0 & 1 & 0 \\
        %
        0 & 0 & 0 
        && 0 & 1 & 0 &
        1 & 1 & 1 
        && 1 & 0 & 1 \\
        %
        0 & 0 & 0 
        && 0 & 0 & 1 &
        0 & 0 & 1 
        && 0 & 1 & 0 
    \end{pmatrix}\,,
\end{equation}
where the first three columns corresponds to the three top qubits ($n=3$) and the last three to the bottom ones ($k=3$).
As shown in Ref.~\cite[Eq.(14)]{Angl_s_Munn__2024}, the logical operators $\Xbar$ are derived by removing the columns that correspond to the qubits in $\HH_{\text{in}}$ leading to
\begin{equation}\label{eq:RX}
   \begin{pmatrix} R \\ \\ \Xbar\end{pmatrix} = \begin{pmatrix}[@{\hskip 0.2cm} ccc @{\hskip 0.2cm}| @{\hskip 0.2cm} ccc@{\hskip 0.2cm} ]
        1 & 0 & 0  
        &
        0 & 1 & 0 
        \\
        %
        0 & 1 & 0 
        &
        1 & 0 & 1 
         \\
        %
        0 & 0 & 1 
        &
        0 & 1 & 0 
         \\
        \\
        0 & 0 & 0 
        &
        1 & 0 & 0 
        \\
        %
        0 & 0 & 0 
        &
        1 & 1 & 1 
       \\
        %
        0 & 0 & 0 
        &
        0 & 0 & 1  
    \end{pmatrix}\,,
\end{equation}
where $R$ is the upper part of the matrix, and $\Xbar$ the logical operators.
Performing now row operations on $R$ leads to logical operators $\Zbar$,
\begin{equation}
   \begin{pmatrix} \Zbar \\ \\ \Xbar\end{pmatrix} = \begin{pmatrix}[@{\hskip 0.2cm} ccc @{\hskip 0.2cm}| @{\hskip 0.2cm} ccc@{\hskip 0.2cm} ]
        1 & 1 & 0  
        &
        1 & 1 & 1 
        \\
        %
        0 & 1 & 0 
        &
        1 & 0 & 1 
         \\
        %
        0 & 1 & 1 
        &
        1 & 1 & 1 
         \\
        \\
        0 & 0 & 0 
        &
        1 & 0 & 0 
        \\
        %
        0 & 0 & 0 
        &
        1 & 1 & 1 
       \\
        %
        0 & 0 & 0 
        &
        0 & 0 & 1  
    \end{pmatrix}\,.
\end{equation}
The row operations are performed so that $\Xbar$ and $\Zbar$ satisfy the expected Pauli commutation relations. 
Using Eq.~\eqref{eq:Zlogical} as a reference, we then write
\begin{align}
D= \begin{pmatrix}
    1 & 1 & 0 \\
    0 & 1 & 1 \\
    0 & 1 & 1 \\
\end{pmatrix}\quad \text{and} \quad G=\begin{pmatrix}
    1 & 1 & 1 \\
    1 & 0 & 1 \\
    1 & 1 & 1 \\
\end{pmatrix}\,.
 \end{align}
We now use Proposition~\ref{lem:gates2} to derive the encoding gates. Since $\CC_G=\emptyset$ and the qubits in $\HH_{\text{in}}$ interact non-trivially, the encoding operators are determined by Eq.~\eqref{eq:encwoanc_graph}:
\begin{enumerate}
    \item From Eq.~\eqref{eq:checkAME}, one can see that the interactions between the qubits in $\HH_{\text{in}}$ are described by $E_{\text{in}}=\{ (12),(23), (13)\}$.
    \item One has the unitary $U_D= \CX_{12}\CX_{32}$, which transforms $(\one_k| 0) \rightarrow (D|0)$.
    \item The Z-part (the three right columns) of the rows in R in Eq.~\eqref{eq:RX} corresponds to $\Gamma_{\Zbar}$, 
    since
    performing row operations on $(\nothing{D}|\nothing{G}|0^\intercal)$ results in $(\one_k | \Gamma_{\Zbar}| 0^\intercal)$. As a consequence, $E_{\Zbar}=\{(12),(23)\}$.
\end{enumerate}
The encoding gates are also illustrated in Fig.~\ref{fig:unitary_rec}(c).

	

\begin{figure}
	\centering
	
	\begin{subfigure}[t]{0.25\linewidth}
		\captionsetup{justification=raggedright,
			singlelinecheck=false,
			position=top}
		\caption{{\small a}}
		\includegraphics[width=\linewidth]{Figures/penta_1.pdf}
	\end{subfigure}
	\hspace{2cm}
	\begin{subfigure}[t]{0.25\linewidth}
		\captionsetup{justification=raggedright,
			singlelinecheck=false,
			position=top}
		\caption{{\small b}}
		\includegraphics[width=\linewidth]{Figures/penta_2.pdf}
	\end{subfigure}
	 \caption{\textbf{Unitary extraction from a graph code.} \textbf{(a)} A graph form of the absolutely maximally entangled state (AME) of 6 qubits.
		\textbf{(b)} This state represents a graph code, and it can be thought of as a unitary that goes from the three qubits of the bottom to the three qubits from the top. The arrow shows this direction.
        This unitary can be expressed as the unitary circuit shown in Fig.~\ref{fig:unitary_rec}(c).
	}
	\label{fig:AME}
\end{figure}
\section{Alternative partial decoding}\label{SS:alternative_partial_decoding}

In this paper, we perform a partial decoding procedure that differs from the one proposed in Ref.~\cite{Angl_s_Munn__2024}. 
This gives a reduced number of gates in the instance of Fig.~\ref{fig:MinimalGraph}(a), but we do not know whether this better scaling holds in general.
The method is based on Section~\ref{sect:enc_extra_q}. 
For simplicity, here we only consider the contraction of two pentagons. This corresponds to the tensor network that allows us to recover two bulk qubits from five boundary qubits, as shown in Fig.~\ref{fig:unitary_rec}(a). However, the same strategy applies to the other partial decoding strategies. 
Ref.~\cite{Angl_s_Munn__2024} first contracts two pentagons to then convert the resultant into a small graph code. In contrast, 
we first convert the two pentagons to a graph code that maps $3$ to $3$ qubits as shown in Fig.~\ref{fig:MinimalGraph}, and then contract them as shown in Fig.~\ref{fig:unitary_rec}.

Consider the absolutely maximally entangled state (AME) of 6 qubits in its graph state form, as shown in Fig.~\ref{fig:AME}(a).
Note that this AME state can be expressed as unitary that goes from $3$ to $3$ qubits written as
\begin{align}
    \ket{\text{AME}} &=\sum^1_{i_1,i_2,i_3=0} \ket{\phi_{i_1i_2i_3}}\ket{i_1i_2i_3} \quad \rightarrow \quad \nn \\ U  &=\sum^1_{i_1, i_2, i_3=0}\ketbra{\phi_{i_1 i_2 i_3}}{i_1i_2i_3}
\end{align}
and illustrated in Fig.~\ref{fig:AME}(b).
This unitary is derived using the method from the encoding described in Section~\ref{sect:enc_extra_q} for the case where $n=k=3$. As a result, the unitary $U$ is decomposed by the gates described in Fig.~\ref{fig:AME}(c).
Consider now the recovery cut from Fig.~\ref{fig:unitary_rec}(a). The unitary $U$ can then be concatenated as
\begin{align}
U^\dagger_h=(U^\dagger_{123} \ot \one_{45})(\one_{12} \ot U^\dagger_{345})
\end{align}
as illustrated in Fig.~\ref{fig:unitary_rec}(b)-(c).

As in Ref.~\cite{Angl_s_Munn__2024}, one needs to further modify $U^\dagger_h$ to partially decode a state that is encoded via the used graph code. This requires the addition of a Hadamard gate that leads to $\widetilde{U}^\dagger_h=U_h H_3$. The decoding circuit is shown in Fig.~\ref{fig:unitary_rec}(c).
The partial decoding operator requires $14$ two-qubit gates and $7$ one-qubit gates.
Similarly, one can derive the decoding circuit for two other different cuts as illustrated in Fig.~\ref{fig:D3}(c) and Fig.~\ref{fig:D4}(c). 

\newpage
\section{Partial decoding: Expectation values and measured decoding circuits}\label{SS:Partial_decoding_2_3}
Partial decoding is always performed by first initializing the code in $\ket{G^{}_{0000}}$ followed by the application of the partial decoding circuits shown in Fig.~\ref{fig:unitary_rec}, Fig.~\ref{fig:D3} or Fig.~\ref{fig:D4}. The unitary $U$ is defined in Fig.~\ref{fig:AME}(c). To decode the expectation value of a logical qubit, we apply $U$ to a designated boundary region of the code subspace, thereby mapping the logical qubit onto a single physical qubit. This qubit is measured and the expectation values are presented in Fig.~\ref{fig:exp_partial_decoding}(a)-(i). Subfigures (a)-(c) correspond to decoding logical qubits A-D without additional gates. Decoding 2 bulk qubits, 3 bulk qubits, and all four logical qubits yields

\vspace{0.2cm}
\begin{enumerate}[(i)]
    \item
    \begin{tabular}{l l}
    $A$: 0.616(15), & [sim: 0.436(16)] \\
    $B$: 0.296(14), & [sim: 0.422(16)] \\
    \end{tabular}

    \item
    \begin{tabular}{l l}
    $A$: 0.394(15), & [sim: 0.442(16)] \\
    $B$: 0.312(15), & [sim: 0.305(16)] \\
    $D$: 0.298(14), & [sim: 0.330(16)] \\
    \end{tabular}

    \item
    \begin{tabular}{l l}
    $A$: 0.600(15), & [sim: 0.441(16)] \\
    $B$: 0.142(11), & [sim: 0.103(9)] \\
    $C$: 0.252(14), & [sim: 0.227(13)] \\
    $D$: 0.472(15), & [sim: 0.323(16)]. \\
    \end{tabular}
\end{enumerate}
Subfigures (d)-(f) depict the same decoding process after applying an MS gate to qubits 6 and 10. Since these qubits are not involved in any decoding operation, the results should remain unaffected. We measure an expectation value for decoding 2, 3, and all 4 qubits of

\vspace{0.2cm}
\begin{enumerate}[(i)]
    \item
    \begin{tabular}{l l}
    $A$: 0.434(16), & [sim: 0.436(16)] \\
    $B$: 0.346(15), & [sim: 0.304(15)] \\
    \end{tabular}

    \item
    \begin{tabular}{l l}
    $A$: 0.446(16), & [sim: 0.436(16)] \\
    $B$: 0.274(14), & [sim: 0.297(14)] \\
    $D$: 0.300(15), & [sim: 0.322(15)] \\
    \end{tabular}

    \item
    \begin{tabular}{l l}
    $A$: 0.448(16), & [sim: 0.441(16)] \\
    $B$: 0.050(7), & [sim: 0.119(10)] \\
    $C$: 0.184(12), & [sim: 0.233(14)] \\
    $D$: 0.278(14), & [sim: 0.338(15)]. \\
    \end{tabular}
\end{enumerate}
Subfigures (g)-(i) show the decoding results after applying an MS gate on qubits 6 and 11. Since Qubit 11 is required to decode logical qubits B-D, this operation is expected to affect the results.  We measure a expectation value for decoding 2, 3, and 4 qubits respective
\vspace{0.2cm}
\begin{enumerate}[(i)]
    \item
    \begin{tabular}{l l}
    $A$: 0.402(16), & [sim: 0.451(16)] \\
    $B$: 0.312(15), & [sim: 0.298(15)] \\
    \end{tabular}

    \item
    \begin{tabular}{l l}
    $A$: 0.346(15), & [sim: 0.411(16)] \\
    $B$: 0.242(14), & [sim: 0.300(14)] \\
    $D$: 0.044(6),  & [sim: 0.022(5)] \\
    \end{tabular}

    \item
    \begin{tabular}{l l}
    $A$: 0.326(15), & [sim: 0.43(16)] \\
    $B$: 0.015(5),  & [sim: 0.007(3)] \\
    $C$: -0.016(4), & [sim: -0.014(4)] \\
    $D$: 0.100(9),  & [sim: -0.001(2)]. \\
    \end{tabular}
\end{enumerate}

\section{Partial decoding: Tomography}\label{SS:Partial_decoding_tomography}
We perform state tomography on the smaller subset of cuts $1-3$, namely performing five-qubit state tomography on the physical qubits used to decode cut 1 and 2 but performing four-qubit state tomography on the idle qubits for cut 3. We perform a maximum likelihood estimation to exclude non-physical results. Each state tomography is combined with a noise free simulation. The simulations are given in Fig.~\ref{fig:Tomography_data_all_cuts}(a), (c), and (e) and the reconstructed measurements are given in \ref{fig:Tomography_data_all_cuts}(b), (d), and (f). The reconstructed density matrices agree qualitatively with the simulations. 
\begin{figure*}
    \centering
    \includegraphics[width=\textwidth]{Figures/Tomo_cut_1.pdf}
    \includegraphics[width=\textwidth]{Figures/Tomo_cut_2.pdf}
    \includegraphics[width=\textwidth]{Figures/Tomo_cut_3.pdf}
    \caption{\textbf{State tomography on boundary qubits required for partial decoding.} \textbf{(a), (c), (d)} Simulated and \textbf{(b), (d), (f)} experimental state tomography data for cuts 1–3 are illustrated. The simulation is done in absence of noise. The experimental results reflect the general characteristics of the simulation. }
    \label{fig:Tomography_data_all_cuts}
\end{figure*}

\section{Logical operators, gates, and stabilizer generators of the holographic heptagon code}\label{SS:Stabilizer_generators_logical_operators_heptagon}
The logical operators of the holographic heptagon code are defined as follows:

\begin{tabbing}
    \hspace{1.5cm}\=$\overline{Z}_A=Z_{1_A}Z_{2_A}Z_{3_A}$, \hspace{0.7cm}\=$\overline{Z}_B=Z_{1_B}Z_{2_B}Z_{3_B}$,\\\\
    \>$\overline{X}_A=X_{1_A}X_{2_A}X_{3_A}$, 
    \>$\overline{X}_B=X_{1_B}X_{2_B}X_{3_B}$\\\\
    \>$\overline{Y}_A=Y_{1_A}Y_{2_A}Y_{3_A}$, 
    \>$\overline{Y}_B=Y_{1_B}Y_{2_B}Y_{3_B}$
\end{tabbing}

The stabilizer generators of the subspace are:
\begin{tabbing}
$g_1 = X_{1_A} X_{2_A} X_{4_A} X_{5_A}$, \hspace{0.7cm} \=$g_2=X_{2_A} X_{3_A} X_{5_A} X_{6_A}$,\\
$g_3 = X_{1_B} X_{2_B} X_{4_B} X_{5_B}$, \>$g_4 = X_{2_B} X_{3_B} X_{5_B} X_{6_B}$\\
$g_5 = X_{4_A} X_{5_A} X_{6_A} X_{4_B} X_{5_B} X_{6_B}$,\\\\
$g_6 = Z_{1_A} Z_{2_A} Z_{4_A} Z_{5_A}$, \>$g_7=Z_{2_A} Z_{3_A} Z_{5_A} Z_{6_A}$,\\
$g_8 = Z_{1_B} Z_{2_B} Z_{4_B} Z_{5_B}$, \>$g_9 = Z_{2_B} Z_{3_B} Z_{5_B} Z_{6_B}$\\
$g_{10} = Z_{4_A} Z_{5_A} Z_{6_A} Z_{4_B} Z_{5_B} Z_{6_B}$,
\end{tabbing}

The holographic heptagon code also features a "quasi"-transversal Hadamard gate. It is implemented by applying a physical Hadamard gate to each physical qubit encoding the logical qubit and a round of quantum error correction (QEC).
\begin{equation}
    \overline{H}_A = H_{1_A}H_{2_A}H_{3_A}H_{4_A}H_{5_A}H_{6_A} \quad +\quad\text{QEC}    
\end{equation}
We call the Hadamard gate "quasi" transversal since applying a Hadamard on logical qubit A induces an error on logical qubit B and vice versa. However, the induced error can be detected and compensated by an additional round of error correction. Therefore, stabilizer $g_5$ is mapped onto the auxiliary qubit $A_1$. To map the $X$ error onto the auxiliary qubit, $A_1$ is initially prepared in the $+$-basis by the means of a physical Hadamard gate. This is followed by a $\text{CNOT}$ gate on each qubit of the stabilizer, with $A_1$ as the control and the qubits of $g_5$ as the targets. An additional $H$ is applied to $A_1$ to rotate it into the $Z$-measurement basis. Similarly, $A_2$ is used to detect a negative expectation value of $g_{10}$. A $\text{CNOT}$ is applied on each qubit of stabilizer $g_{10}$, where the qubits of $g_{10}$ act as control and ancilla $A_2$ as target. The correction given in equation \ref{eq:Hadamard_correction_heptagon} is applied during postprocessing by the means of Pauli-frame update, depending on which flag was raised.\\
The transversal CNOT gate can be constructed from transversal single-qubit gates and a transversal MS gate, which is implemented using six physical MS gates
\begin{equation}
    \exp{\left(\text{i}\frac{\pi}{4}\overline{X}_A\otimes \overline{X}_B\right)}=\sum_{(i, j)\in (A, B)}\exp{\left(\text{i}\frac{\pi}{4}X_i\otimes X_j\right)}.
\end{equation}
Data showing the implementation of the logical zero state $\ket{00}_\text{hep}$, the quasi-transversal Hadamard gate and an MS gate on both logical qubits is shown in Fig.~\ref{fig:Heptagon_init}. The blue bars correspond to the measured data while the orange bars are the result from a numerical simulation.

\begin{figure}[th!]
    \centering
    \includegraphics[width=\linewidth]{Figures/heptagon_init_y_lim.pdf}
    \caption{\textbf{State preparation of the heptagon code.}
    \textbf{(a-c)}
    Stabilizer generators and logical operators for the holographic heptagon code initialized in the logical states $\ket{\overline{00}}_\text{hep}$, 
    $\ket{\overline{+0}}_\text{hep}$, and 
    $(\ket{\overline{00}}_\text{hep}-i\ket{\overline{11}}_\text{hep})/\sqrt{2}$. 
    Blue bars correspond to experimental data while the orange bars correspond to numerical simulation. The error bars are computed from quantum projection noise and represent one standard deviation of statistical uncertainty. The implementation of the Hadamard gate on logical qubit $A$ is followed by post selecting the sign of stabilizer generators $g_5$ and $g_{10}$ and applying the corresponding correction.
    } 
    \label{fig:Heptagon_init}
\end{figure}

\end{document}